\theoremstyle{plain}
\newtheorem{theorem}{Theorem}
\newtheorem{lemma}[theorem]{Lemma}	
\newtheorem{corollary}[theorem]{Corollary}	
\theoremstyle{remark}
\newtheorem*{remark}{Remark}
\def\<{\left\langle} 
\def\>{\right\rangle}
\def\l({\left(}				
\def\r){\right)}						
\def\E{\mathbb{E}}			
\def\P{\mathbb{P}}										
\def\R{\mathbb{R}}
\def\Z{\mathbb{Z}}
\def\A{\mathcal{A}}
\def\B{\mathcal{B}}
\def\D{\mathcal{D}}
\def\F{\mathcal{F}}				
\def\H{\mathcal{H}}						
\def\L{\mathcal{L}}			
\def\O{\mathcal{O}}
\def\eps{\epsilon}
\def\om{\omega}
\def\sig{\sigma}
\def\Lam{\Lambda}
\def\lam{\lambda}
\def\Wtilde{\widetilde{W}}
\def\Btilde{\widetilde{B}}
\def\Etilde{\widetilde{\mathbb{E}}}						
\def\Ptilde{\widetilde{\mathbb{P}}}			
\def\Bhat{\widehat{B}}
\def\What{\widehat{W}}
\def\Xhat{\widehat{X}}
\def\Yhat{\widehat{Y}}
\def\Y{Y^\eps}
\def\sigmabar{\overline{\sigma}}
\def\d{\partial}		
\begin{document}


\title{Time-Changed Fast Mean-Reverting Stochastic Volatility Models}
\author{Matthew Lorig \thanks{Department of Operations Research \& Financial Engineering, Princeton University, Princeton, NJ  08544, {\em mlorig@princeton.edu}.}}
\date{\today}
\maketitle


\begin{abstract}
We introduce a class of randomly time-changed fast mean-reverting stochastic volatility (TC-FMR-SV) models.  Using spectral theory and singular perturbation techniques, we  derive an approximation for the price of any European option in the TC-FMR-SV setting.  Three examples of random time-changes are provided and are shown to induce distinct implied volatility surfaces.  The key features of the TC-FMR-SV framework are that $(i)$ it is able to incorporate jumps into the price process of the underlying asset $(ii)$ it allows for the leverage effect and $(iii)$ it can accommodate multiple factors of volatility, which operate on different time-scales.
\end{abstract}

\section{Introduction}
Stochastic volatility models have played an important role in the derivatives markets over the past twenty years.  Much of the success of stochastic volatility models is due to the fact that two of the earliest and most well-known models--the Heston model \cite{heston1993} and the Hull-White model \cite{hullwhite1987}--capture the most salient features of the implied volatility surface while simultaneously preserving the analytic tractability needed to quickly calculate the price of an option.  Yet the short-comings of these models is well-documented in literature.  For example, the Heston model misprices far in- and out-of-the-money European options \cite{fiorentini2002, zhang}.
\par
There are a number of possible explanations for why the earliest stochastic volatility models fail to match implied volatility levels across all strikes and maturities.  One theory is that a single factor of volatility, running on a single time scale, is not sufficient for describing the dynamics of the volatility process.  Indeed, the existence of several factors of volatility has been documented in literature \cite{alizadeh2001, anderson, chernov2003, engle, fouque2003, hillebrand, lebaron, melino, muller}.  Such evidence has led to the development of multi-scale stochastic volatility models, i.e. models in which instantaneous volatility levels are controlled by multiple diffusions running of different time-scales \cite{fouque2004multiscale, lorig, perelló2004multiple}.
\par
Another line of reasoning states that jumps in the underlying asset price are required in order to capture the true dynamics of the market.  Empirical work supports this notion  \cite{chernov2003}.  Hence, academics and practitioners have developed models that incorporate both jumps in the asset price as well as stochastic volatility \cite{bates1996jumps, duffiepansingleton, scott1997pricing}.
\par
Along these lines, Mendoza-Arriaga et al. recently introduced a unified credit-equity framework in which the underlying asset is modeled as a stochastically time-changed scalar diffusion \cite{carr}.   This work is notable for a number of reasons.  First, the scalar diffusion that controls the asset price may exhibit both local volatility (i.e. volatility that is a function of the scalar diffusion itself) and killing (i.e. jump to default).  When the local volatility is modeled as a negative power of the scalar diffusion a decrease in the underlying asset price results in an increase in volatility.  This feature, known as the leverage effect, has been empirically documented \cite{bouchaud2001leverage}.  Additionally, by subjecting the scalar diffusion to a random time-change the authors are able to incorporate jumps in the asset price as well as non-local factors of stochastic volatility.  Finally, it is shown that, under relatively benign conditions, the framework of Mendoza-Arriaga et al. remains analytically tractable.
We see great value in the work of Mendoza-Arriaga et al. and seek to build upon it.
\par
In this paper, rather than base our model upon a scalar diffusion as in \cite{carr}, we begin with the class of fast mean-reverting stochastic volatility (FMR-SV) models considered by Fouque et al. in \cite{fouque}.  Such models are important because they capture the empirically known-to-exist short time-scale of volatility \cite{fouque2003, hillebrand2005}.  Additionally, FMR-SV models capture the leverage effect by negatively correlating the Brownian motions that drive the asset price and volatility processes.  Using the methods outlined by Mendoza-Arriaga et al. in \cite{carr}, we subject the FMR-SV class of models to a random time-change.  For certain classes of time-changes this has the effect of adding jumps to the underlying asset price as well as additional factors of volatility.  These additional factors of volatility operate on a different time-scale than the fast mean-reverting factor volatility.  We refer to this class of models as the class of \emph{time-changed} fast mean-reverting stochastic volatility (TC-FMR-SV) models.
\par
The rest of this paper proceeds as follows.  In section \ref{sec:model} we introduce the class of TC-FMR-SV models.  This is done in a few steps.  First, in section \ref{sec:FPS}, we review the class of FMR-SV models considered in \cite{fouque}.  Next, in subsection \ref{sec:TimeChangeFPS}, we explain how the FMR-SV class can be extended using random time-changes.  Finally, in section \ref{sec:TimeChange}, we review the three classes of random time-change.  Some specific model assumptions are listed in section \ref{sec:assumptions}.
\par
In section \ref{sec:pricing} we develop our option-pricing methodology.  Again, this is done in several steps.  First, in section \ref{sec:Spectral} we review some important results from spectral theory, which we immediately apply to the European option-pricing problem in the TC-FMR-SV setting.  This reduces the option-pricing problem to that of solving a single eigenvalue equation.  In section \ref{sec:eigenvalue}, we find an approximate solution to this eigenvalue equation using techniques from singular perturbation theory.  Then, in section \ref{sec:prices} we show how to relate the approximate solution of the eigenvalue equation to the approximate price of a European option.  The main result of our work is the formula we provide in Theorem \ref{thm:main} for the approximate price of a European option in the TC-FMR-SV setting.
\par
In section \ref{sec:accuracy} we prove the accuracy of our option-pricing approximation.  And in section \ref{sec:examples} we provide examples of four different random time-changes and calculate the approximate price of a European call option in these time-change regimes.

\section{Model Framework}	\label{sec:model}
In this section we introduce a class of TC-FMR-SV models.  We begin by reviewing the FMR-SV class of models considered by Fouque et al in \cite{fouque}.

\subsection{Review of FMR-SV Models}	\label{sec:FPS}
Under the physical measure $\P$ the FMR-SV class of models has the following dynamics
\begin{align}
S_t							&=			\exp \left( \beta \, t + X_t \right) ,	\\ 
dX_t						&=			- \frac{1}{2}f^2(Y_t^\eps) dt + f(Y_t^\eps) dW_t	,	&X_0 	&=	x	,	\label{eq:dXphys}\\
dY_t^\eps		&=			\frac{1}{\eps}\left(m - Y_t^\eps\right) dt
										+ \frac{\nu\sqrt{2}}{\sqrt{\eps}}  dB_t	,						&Y_0^\eps 	&=	y	,	\label{eq:dYphys}\\
d\<W,B\>_t	&=			\rho \, dt.	\label{eq:correlation}
\end{align}
Here, $W_t$ and $B_t$ are Brownian motions under $\P$ with instantaneous correlation $\rho \in [-1,1]$.  The process $S_t$ represents the price of a non-dividend paying asset (stock, index, etc.), which has expected geometric growth rate $\beta>0$ and stochastic volatility $f(Y^\eps_t)>0$.  The process $Y_t^\eps$ appears as an Ornstein–-Uhlenbeck (OU) process with long-run mean $m \in \R$ and ``vol of vol'' $\nu>0$.  The OU process operates on time-scale $\eps>0$, which is intended to be small so that the rate of mean-reversion $(1/\eps)$ of $\Y_t$ is high.  It is in this sense that $Y_t^\eps$ is fast mean-reverting.  In fact, $Y_t^\eps$ need not be an OU process specifically.  The essential aspect of $Y_t^\eps$ is that it be an ergodic process with a unique invariant distribution.
The function $f(y)$ is left unspecified, as only certain moments of $f(y)$ play a role in the FMR-SV framework.  Specific assumptions on the function $f(y)$ and the process $\Y_t$ will be given in section \ref{sec:assumptions}.
The parameter $\eps$ will play an important role throughout this paper.  As such, we will use a superscript $\eps$ to indicate dependence on this small time-scale parameter.
\par
For the purpose of option-pricing, it is necessary to move to the risk-neutral pricing measure, which we denote as $\Ptilde$.  Under $\Ptilde$ the FMR-SV class of models has the following dynamics
\begin{align}
S_t							&=			\exp \left( r t + X_t \right)	,																							\label{eq:dS}	\\
dX_t						&=			- \frac{1}{2}f^2(Y_t^\eps) dt + f(Y_t^\eps) d\Wtilde_t	,			&X_0 	&=	x	,	\label{eq:dX}	\\
dY_t^\eps		&=			\left[\frac{1}{\eps}\left(m - Y_t^\eps\right)
										- \frac{\nu \sqrt{2}}{\sqrt{\eps}}\Gamma(Y_t^\eps)\right]dt
										+ \frac{\nu\sqrt{2}}{\sqrt{\eps}}  d\Btilde_t	,								&Y_0	&=	y	,	\label{eq:dY}	\\
d\<\Wtilde,\Btilde\>_t	&=			\rho \, dt.
\end{align}
Here, $\Wtilde_t$ and $\Btilde_t$ are Brownian motions under $\Ptilde$ with instantaneous correlation $\rho$.  The Girsanov transformation, which relates the physical measure $\P$ to the risk-neutral measure $\Ptilde$, is chosen such that 
the volatility-driving process $Y_t^\eps$ acquires a market price of volatility risk $\Gamma(Y_t^\eps)$ and such that the discounted asset price $\l(e^{-rt} S_t\r)$ is a martingale under $\Ptilde$.  Note that $r>0$ is the risk-free rate of interest.  As was the case with $f(y)$, the function $\Gamma(y)$ is left unspecified, as only certain moments of $\Gamma(y)$ play a role in the FMR-SV framework.  Specific assumptions on $\Gamma(y)$ will be given in section \ref{sec:assumptions}.

\subsection{TC-FMR-SV Models}	\label{sec:TimeChangeFPS}
Under the risk-neutral measure $\Ptilde$, the TC-FMR-SV models have the following form
\begin{align}
S_t					&=			\exp \left(r t + X_{T_t} \right)	.		\label{eq:STimeChanged}
\end{align}
Here, $(X_t,\Y_t)$ is as described by equations \eqref{eq:dX} - \eqref{eq:dY} in section \ref{sec:FPS}.  The key difference between the TC-FMR-SV class of models and the FMR-SV class is that the dynamics of the $\log$ of the discounted asset price $\log \left( e^{-rt} S_t \right)$, which would simply be given by the two-dimensional Markov diffusion $(X_t,Y_t^\eps)$ in the FMR-SV framework, is now given by a \emph{time-changed} diffusion $(X_{T_t},Y_{T_t}^\eps)$.  Broadly speaking, a random time-change $T_t$ is an increasing process, starting from zero, which is independent of $(X_t,Y_t^\eps)$ and is right-continuous with left limits.  We list specific assumptions on the random time-change $T_t$ in section \ref{sec:assumptions}.

\subsection{Stochastic Time-Changes}\label{sec:TimeChange}
In this paper, we will consider three classes of stochastic time-changes: L\'{e}vy subordinators, absolutely continuous time-changes, and time-changes that are the composition of a L\'{e}vy subordinator and an absolutely continuous time-change.  These three classes of stochastic time-change are employed extensively in \cite{carr} in the context of local volatility models with state-dependent killing rates.  Drawing inspiration from \cite{carr}, we will use these classes of time-change in the context of FMR-SV models.  A review of each of these classes is presented below.  In an effort to avoid re-inventing the wheel, our discussion will be brief, focusing mainly on those aspects necessary for calculating option prices.  For a more detailed discussion of stochastic time-changes, we refer the reader to \cite{carr}.

\subsubsection{L\'{e}vy Subordinator $T_t^1$} \label{sec:Levy}
A \emph{L\'{e}vy Subordinator} $T_t^1$ is a non-decreasing L\'{e}vy process with positive jumps and non-negative drift.  Because all L\'{e}vy processes have stationary and independent increments, the Laplace transform of a L\'{e}vy subordinator can be expressed as
\begin{align}
\Etilde \left[ e^{-\Lambda T_t^1} \right]		&=		e^{ -\phi(\Lambda) t  },	 \qquad \l(\Lam \in \mathcal{I}\r)	,	\label{eq:Levy}	\\
\mathcal{I}	&:=	\left\{ \Lam \in \R : \Etilde \left[ e^{-\Lambda T_t^1} \right]	< \infty \right\}	.
\end{align}
The function $\phi(\Lambda)$ is known as the \emph{L\'{e}vy exponent} of the subordinator $T^1_t$ and is given by the L\'{e}vy-Khintchine formula
\begin{align}
\phi(\Lambda)																&=		\gamma \Lambda + \int_0^\infty \left(1-e^{-\Lambda s} \right) \nu(ds)	.	\label{eq:LevyKintchine}
\end{align}
Because all L\'{e}vy subordinators are of finite variation no truncation of integral \eqref{eq:LevyKintchine} is necessary.  The absence of a $\Lambda$-independent constant term in \eqref{eq:LevyKintchine} means that we have excluded any killing of the stochastic time-change.  We require that the \emph{drift} $\gamma$ of the subordinator $T_t^1$ be non-negative $\gamma \geq 0$.
\par
The \emph{L\'{e}vy measure} $\nu$, which must satisfy
\begin{align}
\int_0^\infty \left(1 \wedge s\right) \nu(ds)	&< \infty	,
\end{align}
describes the arrival rate and distribution of jumps.  Specifically, for some Borel set $B \in \B(\R_+)$, the value $\nu(B)$ gives the intensity of a Poisson process that counts the number of jumps of size $s \in B$.  
\par
For $\Lam \geq 0$ expectation \eqref{eq:Levy} is always finite.  However, in order to prove the accuracy of our pricing approximation in section \ref{sec:accuracy}, we will need to consider the case $\Lam<0$.
To characterize the set $\mathcal{I}$, we recall Theorem $25.17$ of \cite{sato1999levy}, where it is established that 
\begin{align}
\Etilde \left[ e^{-\Lambda T_t^1} \right]	&< \infty \quad \forall \, t &\Longleftrightarrow& & 	\int_1^\infty e^{-\Lam s} \nu(ds) &< \infty	.
\end{align}
In general, $\mathcal{I}$ is an interval $(\underline{\Lam},\infty)$ or $[\underline{\Lam},\infty)$ where $\underline{\Lam} \leq 0$.
\par
An important sub-class of L\'{e}vy subordinators are the subordinators of compound Poisson type.  The jump component of such subordinators is described by a compound Poisson process with (net) jump arrival intensity $\alpha>0$ and jump size distribution $F$.  For such subordinators, the L\'{e}vy measure $\nu(ds)$ can be written
\begin{align}
\nu(ds)	&=	\alpha F(ds)	,
\end{align}
in which case the L\'{e}vy exponent, given by equation \eqref{eq:LevyKintchine}, becomes
\begin{align}
\phi(\Lambda)		&=		\gamma \Lambda + \alpha \left( 1 - \int_0^\infty e^{\Lambda s} F(ds) \right)	.		\label{eq:LevyKintchine2}
\end{align}
\par
Although it is not strictly necessary for our framework, for the sake of computational simplicity, we will be primarily interested in L\'{e}vy subordinators for which the L\'{e}vy exponent $\phi(\Lambda)$ is known in closed form.
\par
We would like to emphasize the importance of L\'{e}vy subordinators as a class of stochastic time-change.  Because L\'{e}vy subordinators $T_t^1$ exhibit jumps, the time-changed diffusion $(X_{T_t^1},Y_{T_t^1}^\eps)$ (and thus the asset price $S_t$) will exhibit jumps as well.  To our knowledge, this is the first time that jumps in the asset price $S_t$ have been incorporated into the FMR-SV framework.
\par
This concludes our brief review of L\'{e}vy subordinators.  For more thorough coverage, we refer the reader to \cite{bertoin}.\\

\subsubsection{Absolutely Continuous Time-Change $T_t^2$}
We now consider stochastic time-changes of the absolutely continuous type.  When we say $T_t^2$ is an \emph{absolutely continuous time-change} we mean that $T_t^2$ can be written as
\begin{align}
T_t^2		&=		\int_0^t V(Z_s) ds	,	&		Z_0	&= z	,	\label{eq:T2}
\end{align}
where $Z_t$ is an infinite lifetime Markov process taking values in $\R^d$.  The function $V:\R^d \rightarrow \R_+$ shall be referred to as the \emph{rate function} of stochastic time-change.  
We are primarily interested in absolutely continuous time-changes $T_t^2$ for which the Laplace transform 
\begin{align}
L(t,z,\Lambda) 	&=		\Etilde_z \left[ e^{-\Lambda T_t^2} \right]	,	 \label{eq:Laplace}		\\
												&=		\Etilde_z \left[ e^{-\Lambda \int_0^t V(Z_s) ds} \right]	,
												&			\l( \Lambda \in \mathcal{J}_t \r)	,		\\
\mathcal{J}_t				&:=		\left\{ \Lam \in \R : \Etilde_z \left[ e^{-\Lambda T_t^2} \right]	< \infty \right\}	.
\end{align}
is known explicitly.  Here, the notation $\Etilde_z\left[ \cdot \right]$ is used to indicate the conditional expectation $\Etilde\left[ \cdot | Z_0 = z \right]$.  We note that \eqref{eq:Laplace} is always finite for $\Lambda \geq 0$.  Values of $\Lam < 0$ for which \eqref{eq:Laplace} is finite depend on the specific choice of $T_t^2$ and must be checked on a case-by-case basis.
\par
In the previous section, we showed that L\'{e}vy subordinators are an important class of stochastic time-change because jumps in the subordinator induce jumps in the asset price.  Absolutely continuous time-changes are important for a very different reason; they have the ability to change a one-factor stochastic volatility model into a multi-factor stochastic volatility model.  To see this, we define 
$\left(\Xhat_t,\Yhat_t^\eps\right) := \left(X_{T_t^2},Y_{T_t^2}^\eps\right)$.
Then, there exist $\Ptilde$-Brownian motions $\What_t$ and $\Bhat_t$ with correlation $\rho$ such that \cite{oksendal}
\begin{align}
d\Xhat_t						&=			- \frac{1}{2}f^2(\Yhat_t^\eps) V(Z_t)dt + f(\Yhat_t^\eps) \sqrt{V(Z_t)} d\What_t	,	&
\Xhat_0							&=			x , \\
d\Yhat_t^\eps		&=			\left[\frac{1}{\eps}\left(m - \Yhat_t^\eps\right)
												- \frac{\nu \sqrt{2}}{\sqrt{\eps}}\Gamma(\Yhat_t^\eps)\right]V(Z_t) dt
												+ \frac{\nu\sqrt{2}}{\sqrt{\eps}}  \sqrt{V(Z_t)} d\Bhat_t	, &
\Yhat_0^\eps			&=			y .
\end{align}
Note that the volatility of $\Xhat_t$ is controlled by the product $f(\Yhat_t^\eps) \sqrt{V(Z_t)}$ rather than just the single factor $f(Y_t^\eps)$, which controls the volatility of $X_t$.
Note also that the multiple factors of volatility are operating on different time-scales; $f(\Yhat_t^\eps)$ acts on a time-scale of $\O(\eps)$ and $\sqrt{V(Z_t)}$ acts on a time-scale of $\O(1)$.  As demonstrated in \cite{lorig}, when compared to their one-factor counterparts, multi-factor stochastic volatility models in which the factors of volatility operate on different time-scales have the ability to vastly improve the fit to the empirically-observed implied volatility surface.

\subsubsection{Composite Time-change $T_t^3$}
Finally, we may consider \emph{composite time-changes}, which are time-changes of the form
\begin{align}
T_t^3							&=		T_{T_t^2}^1	.
\end{align}
Here, $T_t^1$ is a L\'{e}vy subordinator and $T_t^2$ is an absolutely continuous time-change, which is independent of $T_t^1$.  As long as the L\'{e}vy exponent $\phi(\Lambda)$ of $T_t^1$ and the Laplace transform $L(t,z,\Lambda)$ of $T_t^2$ are known explicitly, the Laplace transform of the composite time-change $T_t^3$ can be calculated as well.  This is accomplished by conditioning on the absolutely continuous time-change $T_t^2$ as follows
\begin{align}
\Etilde_z \left[ e^{-\Lambda T_t^3}	\right]
										&=		\Etilde_z \left[ \Etilde_z \left[ e^{-\Lambda T_{T_t^2}^1} \big| \, T_t^2 \right] \right]		\\
										&=		\Etilde_z \left[ e^{-\phi(\Lambda) T_t^2} \right]	\\
										&=		L(t,z,\phi(\Lambda))	,
										&			\l( \Lambda \in \mathcal{K}_t \r)	,													\label{eq:TtLaplace}	\\
\mathcal{K}_t		&:=		\left\{ \Lam \in \R : \Etilde_z \left[ e^{-\Lambda T_t^3} \right]	< \infty \right\}	.	
\end{align}
Again, we remark that $[0,\infty) \subset \mathcal{K}_t$.  However, values of $\Lam < 0$ for which $\Lam \in \mathcal{K}_t$ depend on the specific choice of time-change $T_t^3$.
\par
The importance of composite time-changes is as follows: by combining a L\'{e}vy subordinator with an absolutely continuous time-change we are able to incorporate jumps in the asset price $S_t$ as well as add multiple factors of volatility to the class of FMR-SV models.  The variety gained by combining different types of stochastic time-changes provides us with considerable modeling flexibility.

\subsubsection*{Remark on Notation}
Throughout this paper we shall use the superscripts $1,2,3$ to specify which type of random time-change we wish to consider.  The notation $T_t^1$ will be used to denote a L\'{e}vy subordinator, the notation $T_t^2$ will be used to denote an absolutely continuous time-change and the notation $T_t^3$ will be used to denote a composite time-change.  Finally, if we do not wish to specify a particular class of random time-change we will omit the superscript altogether and use the notation $T_t$.

\subsection{Specific Model Assumptions}\label{sec:assumptions}
We have now described the TC-FMR-SV class of models.  However, we have not been specific about certain technical assumptions on the process $\Y_t$, the functions $f(y)$ and $\Gamma(y)$ and the stochastic time-change $T_t$.  The purpose of this section is to list these assumptions in one place.  Our assumptions are as follows:
\begin{enumerate}
	\item  Under the physical measure $\P$, the process $\Y_t$ is ergodic and has a unique invariant distribution $F_Y$, which is independent of $\eps$ .  We note that this implies the moments $\E [ |Y_t^\eps |^k ]$ are uniformly bounded in $t$.  That is, for every $k \in \Z^+$ there exists a positive constant $C_k < \infty$ such that
	\begin{align}
	\sup_t \E_y [ |Y_t^\eps|^k ]		& \leq 	C_k	.
	\end{align} 	\label{assume:moments}
	\item The volatility function $f(y)$ is a strictly positive function such that
	\begin{enumerate}
	\item the process $\l(X_t,Y_t^\eps\r)$ as the strong solution to SDE's \eqref{eq:dXphys} and \eqref{eq:dYphys} exists and is unique under $\P$,
	\item $\int f^2(y) F_Y(dy) < \infty$,
	\item a solution $\Phi(y)$ to the Poisson equation \eqref{eq:PoissonPhi} exists and is at most polynomially growing.
	\label{assume:fPoisson}
	\end{enumerate}
	\item The function $\Gamma(y)$, which describes the market price of volatility risk, is such that
	\begin{enumerate}
	\item the process $\l( X_t, Y_t^\eps \r)$ as the strong solution of SDE's \eqref{eq:dX} and \eqref{eq:dY} exists and is unique under $\Ptilde$,
	\item there exists a finite constant $C_\Gamma>0$ such that $|\Gamma(y)| \leq C_\Gamma$. \label{item:Cgamma}
	\end{enumerate}
	\item The random time-change $T_t$ is a strictly increasing c\`{a}dl\`{a}g process, which is independent of $\l(X_t,\Y_t \r)$ and satisfies $T_0 =0$ and
	\begin{align}
	C_T		&:=		\Etilde_z \left[ T_t e^{T_t C_\Gamma^2} \right] < \infty	.
	\end{align}	\label{assume:T}
\end{enumerate}

\subsection{The Martingale Condition}\label{sec:martingale}
Although we specified the class of TC-FMR-SV models under a supposedly risk-neutral measure $\Ptilde$, we have not yet shown that the non-dividend-paying asset in these models satisfies the martingale condition
\begin{align}
\Etilde \left[ e^{-rt_2}S_{t_2} \big| \F_{t_1} \right]	 	&=		e^{-rt_1}S_{t_1}	,		& \l( t_1 < t_2 \r)			,
\end{align}
which is required in order for $\Ptilde$ to actually be risk-neutral.  In fact, because $\left(X_{T_t^1},Y_{T_t^1}^\eps\right)$, $\left(X_{T_t^2},Y_{T_t^2}^\eps,Z_t\right)$ and $\left(X_{T_t^3},Y_{T_t^3}^\eps,Z_t\right)$ are time-homogeneous Markov processes, as rigorously established in \cite{carr}, the martingale condition reduces to
\begin{align}
\Etilde_{x,y,z} \left[ e^{-rt}S_{t} \right]	= S_0 = e^{x}	,	\label{eq:martingale}
\end{align}  
where we have used the short-hand notation $\Etilde_{x,y,z}[\cdot]$ to denote the conditional expectation $\Etilde[\cdot|X_0=x,Y_0^\eps=y,Z_0=z]$.  We can verify equation \eqref{eq:martingale} by conditioning on the random time-change $T_t$ as follows
\begin{align}
\Etilde_{x,y,z} \left[ e^{-rt}S_{t} \right]		
		&=	\Etilde_{x,y,z} \left[	\exp \left( x - \frac{1}{2} \int_0^{T_t} f^2(Y_t^\eps) dt + \int_0^{T_t} f(Y_t^\eps) d\Wtilde_t \right) \right]	\\
		&=	\Etilde_{x,y,z} \left[	\Etilde_{x,y,z} \left[ 
												\exp \left( x - \frac{1}{2} \int_0^{T_t} f^2(Y_t^\eps) dt + \int_0^{T_t} f(Y_t^\eps) d\Wtilde_t \right)
												 \Big| \, T_t \right]	\right]	\\
		&=	e^x	,
\end{align}
where we have used the fact that $e^{-\frac{1}{2} \int_0^{T} f^2(Y_t^\eps) dt + \int_0^{T} f(Y_t^\eps) d\Wtilde_t}$ is an exponential martingale.
Having established that the discounted stock-price process $\l(e^{-rt}S_t\r)$ is a martingale under $\Ptilde$ in the TC-FMR-SV framework, we now move on to the option-pricing problem.

\section{Option Pricing}\label{sec:pricing}
In this section, we discuss how the approximate price of any European option can be calculated in the TC-FMR-SV setting.

\subsection{Spectral Representation of European Option Prices}\label{sec:Spectral}
We begin with a brief review of some important results from spectral theory and semigroup operators.
\begin{theorem}\label{thm:spectral}
Suppose $\L$ is a self-adjoint operator acting on a Hilbert space $\H$.  Consider the eigenvalue equation for $(-\L)$
\begin{align}
-\L \, \psi_\lam &= \lam \, \psi_\lam	.
\end{align}
 We denote by $E$ the projection-valued spectral measure of $(-\L)$ and by $\sigmabar(-\L)$ the spectrum of $(-\L)$.  Then:
\begin{enumerate}
\item The operator $(-\L)$ has the following spectral representation
\begin{align}
-\L				&=				 \int_{\sigmabar(-\L)}  \lam \, E \l( d \lam	\r)	.			\label{eq:Lspectral}
\end{align}
If $g$ is a real-valued Borel function on $\R$, then $g(\L)$ can be defined via operational calculus and is given by
\begin{align}
g(-\L)		&=				\int_{\sigmabar(-\L)} g\l( \lam\r) E \l( d\lam \r)	.	\label{eq:gLspectral}
\end{align}
\label{item:spectral}
\item If there exists a number $\gamma> - \infty$ such that $\lam > \gamma$ for all $\lam \in \sigmabar(-\L)$, then
\begin{align}
Q_t &= e^{-t(- \L)}	,						&\l( 0 \leq t < \infty \r),
\end{align}
defines a strongly continuous one-parameter semigroup and the solution to the Cauchy problem
\begin{align}
\l(-\d_t + \L \r) u		&=		0			,		&
u(0,x)									&=		u_0(x)			,			& \l( u_0 \in \H \r)	,
\end{align}
is given by
\begin{align}
u(t,x)		&=		Q_t u_0(x)		=	\int_{\sigmabar(-\L)} e^{- \lam \, t} E\l( d\lam \r) u_0(x) . \label{eq:ThmSolution}
\end{align}
\label{item:semigroup}
\end{enumerate}
\end{theorem}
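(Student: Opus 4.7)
The statement is essentially the spectral theorem for self-adjoint operators combined with the functional calculus and standard semigroup theory, so my plan is to invoke these classical results and check that the hypotheses yield the claimed formulas rather than to reprove the spectral theorem from scratch.

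For part \ref{item:spectral}, I would start from the classical spectral theorem for (possibly unbounded) self-adjoint operators on a Hilbert space, which guarantees the existence of a unique projection-valued Borel measure $E$ on $\R$, supported on $\sigmabar(-\L)$, such that $-\L = \int_{\sigmabar(-\L)} \lam\, E(d\lam)$ on the natural domain $\{\psi \in \H : \int \lam^{2}\, d\<E(\lam)\psi,\psi\> < \infty\}$. The functional calculus for a real-valued Borel function $g$ is then constructed in the standard way: one first defines $g(-\L)$ for simple $g$ by $\sum_i g_i E(A_i)$, extends to bounded Borel $g$ by dominated convergence in the strong operator topology, and finally to general Borel $g$ on the domain $\{\psi : \int |g(\lam)|^{2}\, d\<E(\lam)\psi,\psi\> < \infty\}$. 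This gives \eqref{eq:gLspectral}.

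For part \ref{item:semigroup}, the hypothesis $\sigmabar(-\L) \subset (\gamma,\infty)$ is what makes the exponential well-behaved: for every $t\ge 0$ the function $\lam \mapsto e^{-t\lam}$ is bounded on $\sigmabar(-\L)$ by $e^{-t\gamma}$, so applying \eqref{eq:gLspectral} with $g(\lam) = e^{-t\lam}$ defines $Q_t$ as a bounded operator on all of $\H$ with $\|Q_t\| \le e^{-t\gamma}$. The semigroup identity $Q_{t+s} = Q_t Q_s$ and $Q_0 = I$ follow from the multiplicativity and unit-preserving properties of the functional calculus ($e^{-(t+s)\lam} = e^{-t\lam}e^{-s\lam}$, $e^{0}=1$). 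Strong continuity at $t = 0$ reduces to
\begin{align}
\|Q_t \psi - \psi\|^{2} = \int_{\sigmabar(-\L)} |e^{-t\lam} - 1|^{2}\, d\<E(\lam)\psi,\psi\>,
\end{align}
which tends to $0$ as $t \downarrow 0$ by dominated convergence, the integrand being bounded uniformly in $t \in [0,1]$ by $(1+e^{-\gamma})^{2}$ and converging pointwise to $0$.

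Finally, for the Cauchy problem, I would take $u_0 \in \H$ and set $u(t,x) := Q_t u_0(x)$. For $u_0$ in the domain of $\L$, differentiating the spectral representation under the integral sign (justified by dominated convergence, using that $\lam e^{-t\lam}$ is bounded on $\sigmabar(-\L) \subset (\gamma,\infty)$ for $t$ in any compact subinterval of $(0,\infty)$, and that $\int \lam^{2}\, d\<E(\lam)u_0, u_0\> < \infty$) gives
\begin{align}
\d_t u(t,x) = \int_{\sigmabar(-\L)} (-\lam) e^{-t\lam}\, E(d\lam)\, u_0(x) = \L\, Q_t u_0(x) = \L\, u(t,x),
\end{align}
so $(-\d_t + \L) u = 0$, while $u(0,x) = Q_0 u_0(x) = u_0(x)$ by the $Q_0 = I$ calculation above, establishing \eqref{eq:ThmSolution}. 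The main technical obstacle is the differentiation-under-the-integral step, which requires the domain condition on $u_0$; for general $u_0 \in \H$ the formula \eqref{eq:ThmSolution} still defines the unique mild solution, and the PDE is then satisfied in the semigroup sense.
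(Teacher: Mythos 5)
Your proposal is correct and takes essentially the same approach as the paper: the paper simply cites these as classical results (pointing to Reed--Simon, Chapter 8, for the spectral theorem and functional calculus, and to Rudin's \emph{Functional Analysis}, Chapter 13, for the semigroup statement), and your outline is an accurate sketch of how those textbook proofs go. The only refinement worth noting is that you correctly flag the domain subtlety for differentiating under the spectral integral, which the paper leaves implicit in its citation.
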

\begin{proof}
Items \ref{item:spectral} and \ref{item:semigroup} are classical results from functional analysis.  The proof of item \ref{item:spectral} 
can be found in chapter $8$ of \cite{reedsimon} and the proof of item \ref{item:semigroup} is given in Chapter $13$ of \cite{rudin1973functional}.
\end{proof}
\noindent
For convenience, we will write \eqref{eq:ThmSolution} as
\begin{align}
\int_{\sigmabar(\L)} e^{- \lam \, t} E\l( d\lam \r) u_0(x) &= \int_{\sigmabar(\L)} e^{- \lam \, t} \psi_\lam(x) \mu_{u_0}(d\lam) , \label{eq:spec}
\end{align}
 where $\mu_{u_0}(d\lam) = \l(d\psi_\lam,u_0\r)$.  This will help to make it clear that $E\l( d\lam \r) u_0(x)$ is a projection of $u_0(x)$ onto the eigenspace $\H_\lam := \left\{ \psi \in \H : -\L \psi = \lam \psi \right\}$.  In the special case when the spectrum $\sigmabar(-\L)$ is purely absolutely continuous with respect to the Lebesgue measure, equation \eqref{eq:spec} can be written
\begin{align}
\int_\R C_\om \, e^{- \lam_\om \, t} \, \psi_\om(x) \, d\om ,
\end{align}
where the $C_\om$ are constants chosen such that $\int C_\om \, \psi_\om(x) \, d\om =u_0(x)$.
\par
Now, consider a function $u^\eps(t,x,y)$, defined as
\begin{align}
u^\eps(t,x,y)							&:=				\Etilde \left[ h(X_t) \big| X_0 = x, Y_0^\eps = y \right]		, \label{eq:uDef}
\end{align}
The backward variables $x, y$ satisfy the Kolmogorov backward equation
\begin{align}
\left( -\d_t + \L^\eps_{X,Y} \right) \, u^\eps 	&=	0	,	\label{eq:uPDE} \\
u^\eps(0,x,y)																&=	h(x) .	\label{eq:uBC}
\end{align}
Note that the $\d_t$ term carries a minus sign because $t$ is a forward variable.  We use the notation $\L^\eps_{X,Y}$ to indicate the infinitesimal generator of the Markov process $(X_t,Y_t^\eps)$, defined in \eqref{eq:dX} - \eqref{eq:dY}.  For clarity, we write $\L_{X,Y}^\eps$ explicitly and state its domain $\D\l( \L_{X,Y}^\eps \r)$
\begin{align}
\L_{X,Y}^\eps								&=								\frac{1}{\eps} \l( \left(m-y\right)\d_y + \nu^2 \d^2_{yy} \r)
																											+	\frac{1}{\sqrt{\eps}}\l(  \rho \nu \sqrt{2} f(y) \d^2_{xy}-\nu\sqrt{2}\,\Gamma(y)\d_y \r) \\
																	&\qquad				+  \l( -\frac{1}{2}f^2(y)\,\d_x + \frac{1}{2}f^2(y)\,\d^2_{xx}	 \r) , \\
\D\l( \L_{X,Y}^\eps \r)		&=								\left\{ g : \R^2 \rightarrow \R \text{ s.t. } \lim_{t \searrow 0} \frac{\Etilde_{x,y}[ g(X_t,Y_t) ] - g(x,y) }{t} \, \text{exists for all } (x,y) \in \R^2 \right\}	.
\end{align}
Now, suppose we have the solution to the following \emph{eigenvalue equation}
\begin{align}
0		&=		\L_{X,Y}^\eps \Psi_\Lam^\eps		+		\Lambda^\eps \Psi_\Lam^\eps	.	\label{eq:PsiPDE}
\end{align}
By, \emph{solution} to the eigenvalue equation we mean that we have the full set of \emph{eigenvalues} $\left\{\Lambda^\eps\right\}$ and corresponding \emph{eigenfunctions} $\left\{\Psi_\Lam^\eps(x,y)\right\}$ for which \eqref{eq:PsiPDE} holds.  Then \emph{if} the operator $\L_{X,Y}^\eps$ were self-adjoint on some Hilbert space $\H$, by Theorem \ref{thm:spectral}, the solution to Cauchy problem \eqref{eq:uPDE} - \eqref{eq:uBC} could be expressed as
\begin{align}
u^\eps(t,x,y)			&=		\int e^{-\Lambda^\eps t} \Psi_\Lam^\eps(x,y) \, \mu_h \l( d\Lam^\eps	\r),	
\label{eq:uSpectral}
\end{align}
where the measure $\mu_h$ would be determined by BC \eqref{eq:uBC}.
\par
It is not apparent that there exists a Hilbert space $\H$ on which $\L_{X,Y}^\eps$ is self-adjoint.
As such, it is not clear at this point that eigenvalue equation \eqref{eq:PsiPDE} has a solution, nor is it clear that $u^\eps(t,x,y)$ has a representation of the form \eqref{eq:uSpectral}.  Nevertheless, in this paper we do not endeavor to solve the full the eigenvalue equation \eqref{eq:PsiPDE}.  Rather, we shall use techniques from singular perturbation theory to find an approximate solution to \eqref{eq:PsiPDE}.  We will show that the operator associated with the lowest order solution to \eqref{eq:PsiPDE} is in fact self-adjoint on some Hilbert space.  As a result, $u^\eps(t,x,y)$ can be \emph{approximated} by a function of the form \eqref{eq:uSpectral}.  For the moment, however, it will ease our calculations if we assume that $u^\eps(t,x,y)$ can be written as \eqref{eq:uSpectral}, which we shall refer to as the \emph{spectral representation} of $u^\eps(t,x,y)$.
\par
Supposing $u^\eps(t,x,y)$ can be approximated by a function of the form \eqref{eq:uSpectral}, we would like to use this knowledge to find a spectral representation for the price of a European option in the TC-FMR-SV framework.
To this end, we consider a European option with payoff $h(S_t)$ at maturity date $t<\infty$.  Using risk-neutral pricing, and the Markov property of $\left(X_{T_t},Y_{T_t}^\eps,Z_t\right)$, we may write the price of a European option $P^\eps(t,x,y,z)$ as
\begin{align}
P^\eps(t,x,y,z)	&=	e^{-rt}	\Etilde_{x,y,z} \left[ h\left(e^{rt+X_{T_t}}\right) \right]	.
\end{align}
Conditioning on the random time-change $T_t$ we find
\begin{align}
P^\eps(t,x,y,z)		&=	e^{-rt}	\Etilde_{x,y,z} \left[ \Etilde_{x,y,z} \left[ h\left(e^{rt+X_{T_t}}\right) \Big| T_t \right] \right]
												=	e^{-rt}	\Etilde_{x,y,z} \left[ u^\eps (T_t,x,y;t) \right]	.	\label{eq:Ptemp}
\end{align}
Note that $t$ is just a parameter here--not a variable of $u^\eps(T,x,y;t)$.  Now, we use \eqref{eq:uSpectral} to replace $u^\eps(T,x,y;t)$ with its spectral representation.  We have
\begin{align}
P^\eps(t,x,y,z) 		&=		e^{-rt}	\Etilde_{x,y,z} \left[\int e^{-\Lambda^\eps T_t} \Psi_\Lam^\eps (x,y) \mu_h\l( d\Lam^\eps ; t \r)	 \right]	\\
													&=		e^{-rt}	\int	\Etilde_z \left[e^{-\Lambda^\eps T_t} \right] \Psi_\Lam^\eps (x,y) \mu_h\l( d\Lam^\eps ; t \r)		,
													\label{eq:PepsSpectral}
\end{align}
where passing the expectation through the integral is allowed by Fubini's theorem.
We have used the notation $\mu_h\l( d\Lam^\eps ; t \r)$ to remind us that $\mu_h( d\Lam^\eps ; t)$ depends on the paramter $t$ through the BC $u^\eps(0,x,y;t)		=			h(e^{rt + x}) $.
Assuming it exists, we refer to \eqref{eq:PepsSpectral} as the \emph{spectral representation of the option price} $P^\eps(t,x,y,z)$.  Note that the expectation $\Etilde_z [ e^{-\Lambda^\eps T_t}]$ is given explicitly by either \eqref{eq:Levy}, \eqref{eq:Laplace} or \eqref{eq:TtLaplace}, depending on the type of random time-change.  Hence, in order to fully specify the price of the option $P^\eps(t,x,y,z)$, what remains is to solve eigenvalue equation \eqref{eq:PsiPDE} and determine the measure $ \mu_h\l( d\Lam^\eps ; t \r)$ from the BC 
\begin{align}
u^\eps(0,x,y;t)		&=			h(e^{rt + x}) . \label{eq:newBC}
\end{align}

\subsection{Asymptotic Analysis of the Eigenvalue Equation} \label{sec:eigenvalue}
For general $f(y)$ and $\Gamma(y)$ there is no analytic solution to the eigenvalue equation $0 = \L_{X,Y}^\eps\Psi_\Lam^\eps + \Lam^\eps \Psi_\Lam^\eps$.  However, we note that $\L_{X,Y}^\eps$ can be conveniently decomposed in powers of $\sqrt{\eps}$ as
\begin{align}
\L_{X,Y}^\eps						&=		\frac{1}{\eps} \L^{(-2)} + \frac{1}{\sqrt{\eps}} \L^{(-1)} + \L^{(0)}	,	\\
\L^{(-2)}										&=		\left(m-y\right)\d_y + \nu^2 \d^2_{yy}	,		\\
\L^{(-1)}										&=		\rho \nu \sqrt{2} f(y) \d^2_{xy}-\nu\sqrt{2}\, \Gamma(y)\d_y	,	\label{eq:L-1}	\\
\L^{(0)}										&=		-\frac{1}{2}f^2(y)\d_x + \frac{1}{2}f^2(y)\d^2_{xx}	.	
\end{align}
This decomposition suggests a singular perturbative approach.  To this end, we expand $\Psi_\Lam^\eps$ and $\Lam^\eps$ in powers of $\sqrt{\eps}$.  We have
\begin{align}
\Psi_\Lam^\eps
	&=	\Psi_\Lam^{(0)} + \sqrt{\eps} \, \Psi_\Lam^{(1)} + \eps \, \Psi_\Lam^{(2)} + \ldots,		\label{eq:PsiExpansion} \\	
\Lam^\eps
	&=	\Lam^{(0)} + \sqrt{\eps} \, \Lam^{(1)} + \eps \, \Lam^{(2)} + \ldots	\label{eq:LambdaExpansion} .
\end{align}
Expanding in powers of $\sqrt{\eps}$ (rather than some other power of $\eps$) is a natural choice given the form of $\L_{X,Y}^\eps$.  The validity of this expansion will be justified in section \ref{sec:accuracy}, when we establish the accuracy of our pricing approximation.
\par
We now insert the expansions for $\Psi_\Lam^\eps(x,y)$ and $\Lam^\eps$ into eigenvalue equation \eqref{eq:PsiPDE} and collect terms of like-powers of $\sqrt{\eps}$.  The $\O(\eps^{-1})$ and $\O(\eps^{-1/2})$ equations are
\begin{align}
&\O(\eps^{-1}): &
0
	&=	\L^{(-2)} \Psi_\Lam^{(0)} , \\
&\O(\eps^{-1/2}): &
0
	&=	\L^{(-2)} \Psi_\Lam^{(1)} + \L^{(-1)} \Psi_\Lam^{(0)} .
\end{align}
Noting that all terms in $\L^{(-2)}$ and $\L^{(-1)}$ take derivatives with respect to $y$, we may (and do) choose solutions of the form $\Psi_\Lam^{(0)}=\Psi_\Lam^{(0)}(x)$ and $\Psi_\Lam^{(1)}=\Psi_\Lam^{(1)}(x)$ (i.e. functions of $x$ only).  Continuing the asymptotic analysis, the order $\O(\eps^{0})$ and $\O(\eps^{1/2})$ equations are
\begin{align}
&\O(\eps^0): &
0	&=	\L^{(-2)}\Psi_\Lam^{(2)} + \l(\L^{(0)} + \Lam^{(0)} \r) \Psi_\Lam^{(0)} , \label{eq:PoissonPsi2} \\
&\O(\eps^{1/2}): &
0	&=	\L^{(-2)}\Psi_\Lam^{(3)}	+ \L^{(-1)}\Psi_\Lam^{(2)} + \l( \L^{(0)} + \Lam^{(0)} \r)\Psi_\Lam^{(1)} + \Lam^{(1)} \Psi_\Lam^{(0)}	, \label{eq:PoissonPsi3}
\end{align}
where we have used $\L^{(-1)}\Psi_\Lam^{(1)}(x)=0$ in \eqref{eq:PoissonPsi2}.  Equations \eqref{eq:PoissonPsi2} and \eqref{eq:PoissonPsi3}, respectively, are Poisson equations for $\Psi_\Lam^{(2)}(x,y)$ and $\Psi_\Lam^{(3)}(x,y)$ in the variable $y$ of the form
\begin{align}
0 	&= 	\L_Y^1 \Psi \, + \, g \label{eq:PoissonGeneric}
\end{align}
where $\L^{1}_Y = \L^{(-2)}$ is the infinitesimal generator of $Y_t^{1}$ under the physical measure $\P$.  We wish to consider only those solutions $\Psi(y)$ of \eqref{eq:PoissonGeneric} that exhibit at most polynomial growth as $y\rightarrow \pm \infty$.  With this restriction, a necessary condition for the solvability of \eqref{eq:PoissonGeneric} is
\begin{align}
\< g \>
	&:= \int g(y) \, F_Y(dy)
		=	0	.		\label{eq:centering}
\end{align}
We remind the reader that $F_Y$ is the invariant distribution of $Y_t^\eps$ under the physical measure $\P$.  Equation \eqref{eq:centering} is referred to as the \emph{centering condition}.  Please refer to Appendix \ref{sec:Poisson} for a treatise on the Poisson equation and the centering condition.  Throughout this paper, the notation $\< \cdot \>$ will always indicate averaging with respect to the invariant distribution $F_Y$.  In equations \eqref{eq:PoissonPsi2} and \eqref{eq:PoissonPsi3} the centering conditions become
\begin{align}
0
	&=	\l( \< \L^{(0)} \> + \Lam^{(0)} \r) \Psi_\Lam^{(0)} , \label{eq:center1} \\
0
	&=	\<\L^{(-1)}\Psi_\Lam^{(2)}\> + \l( \< \L^{(0)} \> + \Lam^{(0)} \r)\Psi_\Lam^{(1)} + \Lam^{(1)} \Psi_\Lam^{(0)} . \label{eq:center2}
\end{align}
Eigenvalue equation \eqref{eq:center1} can be solved explicitly, as the operator $\< \L^{(0)} \>$ is given by
\begin{align}
\<\L^{(0)}\>
	&= \frac{\sig^2}{2}\l(\d^2_{xx} - \d^2_x \r) ,	&
\sig^2
	&:=	\< f^2 \> .
\end{align}
However, in order to solve equation \eqref{eq:center2}, we need an expression for $\<\L^{(-1)}\Psi_\Lam^{(2)}(x,\cdot)\>$.  To this end, we note from \eqref{eq:PoissonPsi2}
\begin{align}
\L^{(-2)}\Psi_\Lam^{(2)}
	&= 	- \l( \L^{(0)} + \Lam^{(0)} \r) \Psi_\Lam^{(0)}	
	=		- \l( \L^{(0)} - \< \L^{(0)} \> \r) \Psi_\Lam^{(0)}
	=		- \frac{1}{2}\l( f^2 - \sig^2 \r)\l(\d^2_{xx} - \d_x \r)\Psi_\Lam^{(0)}.
\end{align}
Now, introducing $\Phi(y)$ as a solution to the following Poisson equation
\footnote{We note that \eqref{eq:PoissonPhi} satisfies the centering condition and $\Phi(y)$ exists by assumption \ref{assume:fPoisson} of section \ref{sec:assumptions}.}
\begin{align}
\L^{(-2)}\Phi
	&=	f^2 - \sig^2 , \label{eq:PoissonPhi}
\end{align}
we may express $\Psi_\Lam^{(2)}(x,y)$ as
\begin{align}
\Psi_\Lam^{(2)}(x,y)
	&=	- \frac{1}{2}\Phi(y)\l(\d^2_{xx} - \d_x \r)\Psi_\Lam^{(0)}(x) + C(x)	,
\end{align}
where $C(x)$ is some function which is independent of $y$.  Hence, using \eqref{eq:L-1} we find that $\< \L^{(-1)}\Psi_\Lam^{(2)}(x,\cdot) \>$ is given by
\begin{align}
\< \L^{(-1)}\Psi_\Lam^{(2)} (x,\cdot) \>
	&=	\<
			\l( \rho \nu \sqrt{2} f(\cdot) \d^2_{xy}-\nu \sqrt{2} \, \Gamma(\cdot)\d_y \r)
			\l( - \frac{1}{2}\Phi(\cdot)\l(\d^2_{xx} - \d_x \r)\Psi_\Lam^{(0)}(x) + C(x) \r)
			\>	\\
	&= 	\A^{(1)} \, \Psi_\Lam^{(0)}(x)	,
\end{align}
where
\begin{align}
\A^{(1)}
	&=  V_3 \l( \d^3_{xxx} - \d^2_{xx} \r) + V_2 \l( \d^2_{xx}-\d_x \r) , &
V_2
	&= \frac{\nu}{\sqrt{2}} \<\Gamma \d_y \Phi \>,	 &
V_3
	&= \frac{- \rho \nu}{\sqrt{2}} \<f \d_y \Phi \> .	\label{eq:A}
\end{align}
Thus, from \eqref{eq:center2} we have
\begin{align}
0	&=	 \A^{(1)} \, \Psi_\Lam^{(0)}  + \l( \< \L^{(0)} \> + \Lam^{(0)} \r)\Psi_\Lam^{(1)} + \Lam^{(1)} \Psi_\Lam^{(0)} . \label{eq:eigen1}
\end{align}
Given a solution to \eqref{eq:center1}, one can use \eqref{eq:eigen1} to find expressions for $\Psi_\Lam^{(1)}(x)$ and $\Lam^{(1)}$.
\par
This concludes our asymptotic analysis of eigenvalue equation \eqref{eq:PsiPDE}.  Before we present an explicit solution to \eqref{eq:center1} and \eqref{eq:eigen1}, we recall the following result from Sturm-Liouville theory.
\begin{theorem} \label{thm:SturmLiouville}
The eigenfunctions $\Psi_\Lam^{(0)}(x)$ of equation \eqref{eq:center1} form a complete orthonormal basis in the Hilbert space $\H:=L^2(\R,s(x)dx)$ where
\begin{align}
s(x) \, dx
	&= e^{-x} dx , &
\l( u,v \r)_s
	&=	\int \overline{u(x)} \, v(x) \, s(x) \, dx	.
\end{align}
The notation $\overline{u(x)}$ indicates the complex conjugate of $u(x)$.
\end{theorem}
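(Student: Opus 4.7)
The plan is to reduce the eigenvalue problem
\begin{align}
\l(\<\L^{(0)}\> + \Lam^{(0)}\r)\Psi_\Lam^{(0)} = 0, \qquad \<\L^{(0)}\> = \frac{\sig^2}{2}\l(\d_{xx}^2 - \d_x\r),
\end{align}
to the standard Fourier decomposition on the unweighted Hilbert space $L^2(\R,dx)$ via an explicit unitary conjugation. The motivation is that $\<\L^{(0)}\>$ has constant coefficients, the weight $s(x) = e^{-x}$ is a pure exponential, and the form of the first-order term $-\frac{\sig^2}{2}\d_x$ matches exactly the logarithmic derivative $-1$ of $s(x)$. These features together suggest that a simple multiplicative change of variable of the form $\Psi(x) = e^{x/2}\phi(x)$ will symmetrize the operator and remove the first derivative term.

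First, I would verify that the map $U:L^2(\R,dx)\to\H$ defined by $(U\phi)(x)=e^{x/2}\phi(x)$ is a unitary isomorphism, since
\begin{align}
\|U\phi\|_s^2 = \int |e^{x/2}\phi(x)|^2\, e^{-x}\, dx = \int|\phi(x)|^2\,dx = \|\phi\|_2^2 .
\end{align}
Next, I would compute the conjugated operator $U^{-1}\<\L^{(0)}\>U$. Applying the product rule gives $(\d_{xx}^2 - \d_x)(e^{x/2}\phi) = e^{x/2}\l(\d_{xx}^2\phi - \tfrac{1}{4}\phi\r)$, so that
\begin{align}
U^{-1}\<\L^{(0)}\>U = \frac{\sig^2}{2}\d_{xx}^2 - \frac{\sig^2}{8}.
\end{align}
This is (a constant shift of) the free Laplacian, which is essentially self-adjoint on $C_c^\infty(\R)\subset L^2(\R,dx)$ and is fully diagonalized by the Fourier transform. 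Its generalized eigenfunctions are the plane waves $\phi_\om(x) = e^{i\om x}/\sqrt{2\pi}$, which by the Plancherel theorem form a complete Dirac-orthonormal family in $L^2(\R,dx)$.

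Pulling this back through $U$ yields the claimed basis for $\H$: the family
\begin{align}
\Psi_\Lam^{(0)}(x) = \frac{1}{\sqrt{2\pi}}\, e^{x/2}\, e^{i\om x}, \qquad \Lam^{(0)}(\om) = \frac{\sig^2}{2}\l(\om^2 + \tfrac{1}{4}\r),
\end{align}
indexed by $\om\in\R$, is a complete orthonormal basis of $\H = L^2(\R,e^{-x}dx)$ in the same generalized sense (Dirac-orthonormality plus a resolution of the identity), and unitarity of $U$ transports Plancherel's identity to $\H$ verbatim. A direct verification $(\Psi_\om,\Psi_{\om'})_s = \delta(\om-\om')$ is a one-line integral computation.

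The main delicacy, and not really an obstacle so much as a point of interpretation, is that the spectrum $\sigmabar(-\<\L^{(0)}\>) = [\sig^2/8,\infty)$ is purely absolutely continuous, so the $\Psi_\Lam^{(0)}$ do not lie in $\H$ itself; the phrase \emph{complete orthonormal basis} in the theorem statement has to be read distributionally, exactly as in equation \eqref{eq:spec}. The unitary conjugation reduces everything to the classical Fourier/Plancherel theorem, and therefore no independent invocation of Sturm-Liouville theory on the line, Weyl limit-point/limit-circle classification, or integration-by-parts verification of formal self-adjointness against the weight $s(x)=e^{-x}$ is needed — though any of these would give alternative routes.
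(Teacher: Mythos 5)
Your proof is correct and takes a genuinely different, more self-contained route than the paper. The paper's proof is a one-line citation: it observes that $\<\L^{(0)}\>$ is (formally) self-adjoint on the weighted space $L^2(\R,e^{-x}dx)$ and then invokes completeness of eigenfunctions as a standard Sturm--Liouville result, referring the reader to several ODE texts for the details (including, implicitly, the limit-point/limit-circle and continuous-spectrum subtleties on the whole line). You instead construct the explicit unitary equivalence $U:L^2(\R,dx)\to L^2(\R,e^{-x}dx)$, $(U\phi)(x)=e^{x/2}\phi(x)$, verify the correct conjugation identity $(\d^2_{xx}-\d_x)(e^{x/2}\phi)=e^{x/2}(\d^2_{xx}\phi-\tfrac14\phi)$, and reduce everything to the free Laplacian and Plancherel's theorem, with $U$ transporting the Dirac-orthonormal resolution of the identity verbatim. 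The trade-off: the paper's appeal to Sturm--Liouville theory is shorter and would also cover non-constant-coefficient generalizations, whereas your conjugation argument exploits the constant coefficients and pure-exponential weight to avoid citing the harder general machinery, makes the spectrum $[\sig^2/8,\infty)$ and the generalized eigenfunctions $\Psi^{(0)}_\om(x)=e^{(i\om+1/2)x}/\sqrt{2\pi}$ transparent, and correctly flags that ``complete orthonormal basis'' must be read distributionally since the spectrum is purely absolutely continuous and the $\Psi^{(0)}_\om$ are not elements of $\H$. Both routes establish the theorem; yours gives more insight into where the eigenfunctions in Theorem \ref{thm:eigenfunctions} come from.
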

\begin{proof}
The proof is by showing that $\<\L^{(0)}\>$ of equation \eqref{eq:center1} is self-adjoint in $L^2(\R,s(x)dx)$.  This is a standard result of Sturm-Liouville theory.  Details can be found in any number of texts on differential equations \cite{al2008sturm, amrein2005sturm, stakgold2000boundary, zill2008differential, hinton1997spectral}.  
\end{proof}
\begin{theorem}\label{thm:eigenfunctions}
The order $\O\l(\eps^0\r)$ eigenfunctions $\Psi_\om^{(0)}	(x)$ and eigenvalues $\Lambda_\om^{(0)}	$ are given by
\begin{align}
\Psi_\om^{(0)}(x)					&=	\frac{1}{\sqrt{2 \pi}}e^{\l( i \om + 1/2 \r)x} ,	\label{eq:Psi0}	\\
\Lambda_\om^{(0)}				&=	\frac{\sigma^2}{2} \l( \om^2 + \frac{1}{4}\r)	,	\label{eq:Lambda0}
\end{align}
where $\om \in \R$.  The order $\O\l(\eps^{1/2}\r)$ corrections $\Psi_\om^{(1)}(x)	$ and $\Lambda_\om^{(1)}	$ are
\begin{align}
\Psi_\om^{(1)}(x)					&=	0	,	\label{eq:Psi1}	\\ 
\Lambda_\om^{(1)}				&=		-	V_3 \l( \l(i \om + \frac{1}{2}\r)^3 - \l(i \om + \frac{1}{2}\r)^2 \r) - V_2 \l( \l(i \om+ \frac{1}{2}\r)^2 - \l(i \om + \frac{1}{2}\r) \r) ,		\label{eq:Lambda1} 
\end{align}
where $V_2$ and $V_3$ are defined in \eqref{eq:A}.  We note that $\Lam_\om^{(0)} \geq \Lam_{min}^{(0)} := \sig^2/8$.
\end{theorem}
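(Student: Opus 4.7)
The plan is to handle the two asymptotic orders sequentially, both of which reduce to spectral problems for the constant-coefficient operator $\<\L^{(0)}\> = \frac{\sig^2}{2}(\d_{xx}^2 - \d_x)$ acting on the weighted space $\H = L^2(\R, e^{-x}dx)$ identified in Theorem \ref{thm:SturmLiouville}. Because $\<\L^{(0)}\>$ has constant coefficients, the natural ansatz for the eigenfunctions is exponential, and each equation collapses to polynomial algebra in the exponent.

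For the leading order \eqref{eq:center1}, I would substitute $\Psi^{(0)}(x) = C\, e^{\alpha x}$, which immediately yields the dispersion relation $\Lam^{(0)} = \frac{\sig^2}{2}(\alpha - \alpha^2)$. To produce real non-negative eigenvalues indexed by a continuous parameter while keeping the generalized eigenfunction bounded modulo the weight, I would set $\alpha = i\om + \tfrac{1}{2}$ for $\om \in \R$: this is precisely the shift that makes $|\Psi^{(0)}(x)|^2 e^{-x}$ constant in $x$, the hallmark of a continuous spectrum in $\H$. Expanding yields $\Lam_\om^{(0)} = \frac{\sig^2}{2}(\om^2 + \tfrac{1}{4})$, whose minimum $\sig^2/8$ is attained at $\om=0$. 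The normalization $C = 1/\sqrt{2\pi}$ is then forced by the Dirac orthonormality $(\Psi_\om^{(0)}, \Psi_{\om'}^{(0)})_s = \delta(\om - \om')$, recovering the standard Fourier inversion for the family $\{\Psi_\om^{(0)}\}_{\om\in\R}$.

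For the first correction \eqref{eq:eigen1}, I would view it as a linear equation for $\Psi_\om^{(1)}$ of the form $(\<\L^{(0)}\> + \Lam_\om^{(0)})\Psi_\om^{(1)} = -(\A^{(1)} + \Lam_\om^{(1)})\Psi_\om^{(0)}$ and invoke the Fredholm solvability condition. Since $\<\L^{(0)}\> + \Lam_\om^{(0)}$ is self-adjoint on $\H$ with kernel spanned by $\Psi_\om^{(0)}$, the right-hand side must have zero projection on $\Psi_\om^{(0)}$. Applying $\A^{(1)} = V_3(\d_{xxx}^3 - \d_{xx}^2) + V_2(\d_{xx}^2 - \d_x)$ directly to the exponential $\Psi_\om^{(0)}$ returns $\A^{(1)}\Psi_\om^{(0)} = p(\alpha)\Psi_\om^{(0)}$ with $p(\alpha) = V_3(\alpha^3 - \alpha^2) + V_2(\alpha^2 - \alpha)$ and $\alpha = i\om + \tfrac{1}{2}$. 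Choosing $\Lam_\om^{(1)} = -p(\alpha)$ cancels the obstruction and reproduces \eqref{eq:Lambda1} after writing $\alpha$ out. Since the right-hand side then vanishes identically, any solution $\Psi_\om^{(1)}$ lies in the one-dimensional kernel, and the standard orthogonality convention $(\Psi_\om^{(0)}, \Psi_\om^{(1)})_s = 0$ picks out $\Psi_\om^{(1)} = 0$.

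The main obstacle, such as it is, is conceptual rather than computational: justifying the shift by $\tfrac{1}{2}$ in $\alpha = i\om + \tfrac{1}{2}$. This cannot be read off from the ODE alone and is enforced by the weight $e^{-x}$ of the Hilbert space—equivalently, by the self-adjointness of $\<\L^{(0)}\>$ on $\H$ established in Theorem \ref{thm:SturmLiouville}, which fixes the real part of the exponent. Once that parametrization is in hand, the rest of the proof is algebraic manipulation of exponentials and a direct verification of the Fredholm solvability condition.
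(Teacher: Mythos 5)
Your proposal is correct and amounts to the same computation as the paper's proof, just presented constructively (exponential ansatz plus solvability at $\O(\eps^{1/2})$) rather than as the retrospective "direct substitution" verification the paper gives; both hinge on the dispersion relation for the constant-coefficient operator $\<\L^{(0)}\>$ and on the Dirac orthonormality under the weight $e^{-x}$. No gap here—the derivation is the natural way to arrive at what the paper simply asserts and checks.
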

\begin{proof}
A direct substitution shows that \eqref{eq:Psi0}, \eqref{eq:Lambda0}, \eqref{eq:Psi1} and \eqref{eq:Lambda1} satisfy equations \eqref{eq:center1} and \eqref{eq:eigen1}.  One can easily verify that the $\O\l(\eps^0\r)$ eigenfunctions $\left\{ \Psi_\om^{(0)}(x) \right\}$ form a complete basis in $L^2(\R,s(x)dx)$ and satisfy the orthogonality condition
\begin{align}
\l( \Psi_\nu^{(0)} , \Psi_\om^{(0)} \r)_s = \delta(\nu - \om) .	\label{eq:orthogonal}
\end{align}
\end{proof}

\subsection{Option Prices}\label{sec:prices}
We have found explicit expressions for the approximate eigenvalues $\Lam_\om^\eps \approx \Lam_\om^{(0)} + \sqrt{\eps} \, \Lam_\om^{(1)}$ and approximate eigenfunctions $\Psi_\om^\eps(x,y) \approx \Psi_\om^{(0)}(x) + \sqrt{\eps} \, \Psi_\om^{(1)}(x)$.  We now use these expressions to specify the approximate price $P^\eps(t,x,y,z) \approx P^{(0)}(t,x,z) + \sqrt{\eps} \, P^{(1)}(t,x,z)$ of an option.  The following Theorem serves as the main result of our work.
\begin{theorem} \label{thm:main}
The approximate price of an option is given by
\begin{align}
P^\eps(t,x,y,z) 	&\approx 	P^{(0)}(t,x,z) + \sqrt{\eps} \, P^{(1)}(t,x,z) , \\
P^{(0)}(t,x,z) 		&=	e^{-rt} \int C_\om^{(0)}(t) \, \Etilde_z \left[ e^{-\Lam_\om^{(0)} T_t}\right] \Psi_\om^{(0)}(x) d\om	,	\label{eq:P0} \\
P^{(1)}	(t,x,z) 	&=	e^{-rt} \int C_\om^{(0)}(t) \, \Etilde_z \left[ \l( - \Lam_\om^{(1)} T_t \r)e^{-\Lam_\om^{(0)} T_t}\right] \Psi_\om^{(0)}(x) d\om , \label{eq:P1}
\end{align}
where the coefficients $C_\om^{(0)}(t)$ are given by
\begin{align}
C_\om^{(0)}(t)
	&=	\l( \Psi_\om^{(0)}(\cdot), h(e^{rt \, + \, \cdot \,})\r)_s	.		\label{eq:C0}
\end{align}
For a composite time-change $T_t^3$ we have
\begin{align}
\Etilde_z \left[ e^{-\Lam_\om^{(0)} T_t^3}\right]																								&=	L\l(t,z,\phi(\Lam_\om^{(0)})\r) , \label{eq:Expectation0} \\
\Etilde_z \left[ \l( - \Lam_\om^{(1)} T_t^3 \r)e^{-\Lam_\om^{(0)} T_t^3}\right]						&=	\d_\alpha L\l(t,z,\phi_\om^{(1)}\alpha\r)	\Big|_{\alpha=\phi_\om^{(0)}/\phi_\om^{(1)}}	, \label{eq:Expectation1}\\
\phi_\om^{(0)}		&=		\phi(\Lambda_\om^{(0)})	,	\label{eq:phi0} \\
\phi_\om^{(1)}		&=		\d_\alpha \phi(\Lambda_\om^{(1)}\alpha)	\Big|_{\alpha=\Lambda_\om^{(0)}/\Lambda_\om^{(1)}}	.	\label{eq:phi1}
\end{align}
The corresponding expressions for a L\'{e}vy subordinator $T_t^1$ and an absolutely continuous time-change $T_t^2$ can be recovered by setting $L(t,z,\phi)=e^{-\phi \, t}$ and $\phi(\Lam)=\Lam$ respectively.
The $\O \l( \eps^0\r)$ eigenfunctions $\Psi_\om^{(0)}(x)$ and the approximate eigenvalues $\Lam_\om^\eps \approx \Lam_\om^{(0)} + \sqrt{\eps} \, \Lam_\om^{(1)}$ are given in Theorem \ref{thm:eigenfunctions}.
\end{theorem}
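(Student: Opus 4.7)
The plan is to start from the spectral representation \eqref{eq:PepsSpectral} and insert the expansions for the eigenvalues and eigenfunctions produced in Theorem \ref{thm:eigenfunctions}, matching the spectral measure $\mu_h(d\Lam^\eps;t)$ against the boundary condition \eqref{eq:newBC}. Concretely, since Theorem \ref{thm:eigenfunctions} parametrises the approximate eigenpairs by $\om\in\R$ with $\Psi_\om^\eps(x,y)\approx \Psi_\om^{(0)}(x)+\sqrt{\eps}\,\Psi_\om^{(1)}(x)=\Psi_\om^{(0)}(x)$ (using $\Psi_\om^{(1)}\equiv 0$) and $\Lam_\om^\eps\approx \Lam_\om^{(0)}+\sqrt{\eps}\,\Lam_\om^{(1)}$, I would rewrite \eqref{eq:PepsSpectral}, after a change of variable $\Lam^\eps\mapsto \om$, as
\begin{align}
P^\eps(t,x,y,z)\;\approx\; e^{-rt}\int \Etilde_z\!\left[e^{-\Lam_\om^\eps T_t}\right]\Psi_\om^{(0)}(x)\,C_\om^\eps(t)\,d\om,
\end{align}
where $C_\om^\eps(t)$ is to be determined from the initial condition.

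Next I would pin down $C_\om^\eps(t)$. Because the approximate eigenfunctions $\{\Psi_\om^{(0)}\}$ form a complete orthonormal system in $\H=L^2(\R,s(x)dx)$ by Theorem \ref{thm:SturmLiouville}, with orthogonality relation \eqref{eq:orthogonal}, the initial condition $u^\eps(0,x,y;t)=h(e^{rt+x})$ and the fact that $\Psi_\om^{(1)}=0$ force $C_\om^\eps(t)=C_\om^{(0)}(t)+O(\eps)$ with
\begin{align}
C_\om^{(0)}(t)\;=\;\bigl(\Psi_\om^{(0)}(\cdot),\,h(e^{rt+\cdot})\bigr)_s,
\end{align}
matching \eqref{eq:C0}. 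With $C_\om^{(0)}(t)$ identified, I would then Taylor-expand the exponential in the time-changed Laplace transform at fixed $\om$:
\begin{align}
\Etilde_z\!\left[e^{-\Lam_\om^\eps T_t}\right]
\;=\;\Etilde_z\!\left[e^{-\Lam_\om^{(0)} T_t}\,e^{-\sqrt{\eps}\,\Lam_\om^{(1)} T_t}\right]
\;\approx\;\Etilde_z\!\left[e^{-\Lam_\om^{(0)} T_t}\right]+\sqrt{\eps}\,\Etilde_z\!\left[(-\Lam_\om^{(1)} T_t)\,e^{-\Lam_\om^{(0)} T_t}\right],
\end{align}
and interchange the $\om$-integral with the expansion (a Fubini/dominated-convergence step), collecting the $\O(\eps^0)$ and $\O(\sqrt{\eps})$ terms into the claimed formulas \eqref{eq:P0}--\eqref{eq:P1}.

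Finally I would handle the explicit form of the Laplace transform for the composite time-change. Conditioning on $T_t^2$ and using \eqref{eq:TtLaplace} gives $\Etilde_z[e^{-\Lam_\om^{(0)} T_t^3}]=L(t,z,\phi(\Lam_\om^{(0)}))=L(t,z,\phi_\om^{(0)})$, which yields \eqref{eq:Expectation0}. For \eqref{eq:Expectation1} I would write
\begin{align}
\Etilde_z\!\left[(-\Lam_\om^{(1)} T_t^3)e^{-\Lam_\om^{(0)} T_t^3}\right]
\;=\;\frac{d}{d\alpha}\Etilde_z\!\left[e^{-\alpha\Lam_\om^{(1)} T_t^3}\right]\bigg|_{\alpha=\Lam_\om^{(0)}/\Lam_\om^{(1)}}
\;=\;\frac{d}{d\alpha}L\!\left(t,z,\phi(\alpha\Lam_\om^{(1)})\right)\bigg|_{\alpha=\Lam_\om^{(0)}/\Lam_\om^{(1)}},
\end{align}
and then use the chain rule together with the definition $\phi_\om^{(1)}=\d_\alpha\phi(\Lam_\om^{(1)}\alpha)|_{\alpha=\Lam_\om^{(0)}/\Lam_\om^{(1)}}$ to convert it into $\d_\alpha L(t,z,\phi_\om^{(1)}\alpha)|_{\alpha=\phi_\om^{(0)}/\phi_\om^{(1)}}$. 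The Lévy subordinator and purely absolutely continuous cases then follow by specialising $L(t,z,\phi)=e^{-\phi t}$ and $\phi(\Lam)=\Lam$ respectively.

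The main obstacle is not the algebra (which is largely a substitution exercise) but making the formal spectral expansion meaningful in the first place: it requires justifying the change of variable $\Lam^\eps\mapsto\om$ in the spectral measure, the interchange of $\om$-integration with the $\sqrt{\eps}$-expansion of the exponential, and the identification of the $O(\eps^0)$ coefficient $C_\om^{(0)}(t)$ via the Sturm--Liouville basis. Rigor on these points is the subject of the accuracy analysis of section \ref{sec:accuracy}; within the proof of the theorem itself I would treat the expansions at the formal level and appeal to that later section for the precise error control.
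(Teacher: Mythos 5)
Your proposal follows essentially the same route as the paper: expand the spectral representation in powers of $\sqrt\eps$, pin down $C_\om^{(0)}(t)$ via the Sturm--Liouville orthogonality relation \eqref{eq:orthogonal}, and then Taylor-expand the Laplace transform of the time-change (your differentiation trick for \eqref{eq:Expectation1} is equivalent to the paper's ``expand $L(t,z,\phi(\Lam_\om^\eps))$ in $\sqrt\eps$''). The only step you compress is the explicit identification $C_\om^{(1)}(t)=0$, which the paper obtains by projecting the $\O(\sqrt\eps)$ boundary condition $u^{(1)}(0,x;t)=0$ onto the basis; your ``$C_\om^\eps=C_\om^{(0)}+O(\eps)$'' asserts the conclusion without that short computation, but the reasoning is otherwise the same.
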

\begin{proof}
Consider the spectral representation of $u^\eps(T,x,y;t)$ given by \eqref{eq:uSpectral}.  
Recall that we use the notation $\mu_h \l( d\Lam_\om^\eps	; t \r)$ to remind us that $u^\eps(T,x,y;t)$ has a BC $u^\eps(0,x,y;t)=h(e^{rt+x})$ that takes $t$ as a parameter.
We expand $\mu_h \l( d\Lam_\om^\eps	; t \r)$ and $e^{-\Lam_\om^\eps T}$ in powers of $\sqrt{\eps}$
\begin{align}
\mu_h \l( d\Lam_\om^\eps	; t \r)
	&= C_\om^{(0)} (t) \, d\om+ \sqrt{\eps} \, C_\om^{(1)}(t) \, d\om + \ldots , \label{eq:muExpand}\\
e^{-\Lam_\om^\eps T}
	&= e^{-\Lam_\om^{(0)} T} + \sqrt{\eps} \, \l( - \Lam_\om^{(1)} T \r)e^{-\Lam_\om^{(0)} T}  + \ldots . \label{eq:ExpExpand}
\end{align}
Note that we have expanded the measure $\mu_h \l( d\Lam_\om^\eps ; t \r)$ in terms of a density $C_\om^\eps(t)$ as the spectrum of the $\O\l(\eps^0\r)$ eigenvalue problem is absolutely continuous with respect to the Lebesgue measure $d \om$.  Inserting expansions \eqref{eq:PsiExpansion}, \eqref{eq:muExpand} and \eqref{eq:ExpExpand} into \eqref{eq:uSpectral} and collecting terms of like-powers of $\sqrt{\eps}$ yields
\begin{align}
&\O\l(\eps^0\r):			&
u^{(0)}	(T,x;t)						&=	\int C_\om^{(0)}(t) \, e^{-\Lam_\om^{(0)} T} \Psi_\om^{(0)}(x) d\om,\\
&\O\l(\eps^{1/2}\r):	&
u^{(1)}	(T,x;t)						&=	\int \l( C_\om^{(1)}(t) \, e^{-\Lam_\om^{(0)} T} \Psi_\om^{(0)}(x) + C_\om^{(0)}(t) \, \l(-\Lam_\om^{(1)}T\r)e^{-\Lam_\om^{(0)} T} \Psi_\om^{(0)}(x) \r) d\om,
\end{align}
where we have dropped the $C_\om^{(0)}(t) \, e^{-\Lam_\om^{(0)} T} \Psi_\om^{(1)}(x)$ term because $\Psi_\om^{(1)}(x)=0 $.  Expressions for $C_\om^{(0)}(t)$ and $C_\om^{(1)}(t)$ can be obtained from the BC's $u^{(0)}(0,x;t)=h(e^{rt + x})$ and $u^{(1)}(0,x;t)=0$.  We have
\begin{align}
&\O\l(\eps^0\r):			&
u^{(0)}(0,x;t)
	&=	h(e^{rt+x}) = \int C_\om^{(0)}(t) \Psi_\om^{(0)}(x) d\om	,	\\
&\O\l(\eps^{1/2}\r):	&
u^{(1)}(0,x;t)
	&=	0	= \int C_\om^{(1)}(t) \Psi_\om^{(0)}(x) d\om	.
\end{align}
Hence
\begin{align}
&\O\l(\eps^0\r):	&
\l( \Psi_\nu^{(0)} (\cdot),h(e^{rt \, + \, \cdot \,}) \r)_s
	&=	\int C_\om^{(0)}(t) \l( \Psi_\nu^{(0)} , \Psi_\om^{(0)}  \r)_s d\om = C_\nu^{(0)}(t) ,	\\
&\O\l(\eps^{1/2}\r):	&
0
	&=	\int C_\om^{(1)}(t) \l( \Psi_\nu^{(0)} , \Psi_\om^{(0)} \r)_s d\om = C_\nu^{(1)}(t) ,
\end{align}
where we have used \eqref{eq:orthogonal}.
\par
We have now obtained an explicit expression for $u^\eps(T,x,y;t) \approx u^{(0)}(T,x;t) + \sqrt{\eps}\, u^{(1)}(T,x;t)$.  In order to find an expression for the approximate price of an option $P^\eps(t,x,y,z) \approx P^{(0)}(t,x,z) + \sqrt{\eps}\, P^{(1)}(t,x,z)$ we simply insert our expansion for $u^\eps(T_t,x,y;t)$ into \eqref{eq:Ptemp}, which yields \eqref{eq:P0} and \eqref{eq:P1}.  Expressions \eqref{eq:Expectation0} and \eqref{eq:Expectation1} are given for a composite time-change $T_t^3$and can be obtained by expanding $L\l(t, z, \phi(\Lam_\om^\eps) \r)$ in powers of $\sqrt{\eps}$.
\end{proof}
\begin{remark}
We note that the existence and finiteness of expectations \eqref{eq:Expectation0} and \eqref{eq:Expectation1} is guaranteed for all $\om \in \R$ by assumption \ref{assume:T} of section \ref{sec:assumptions} and by the fact that $\forall \, \om \in \R$ we have $-\Lam_\om^{(0)} \leq - \Lam_{min}^{(0)} < 0 < C_\Gamma^2$.
\end{remark}
\begin{corollary} \label{cor:Vs}
The function $\sqrt{\eps}\,P^{(1)}(t,x,z)$ is linear in the group parameters
\begin{align}
V_2^\eps
	&:= \sqrt{\eps}\,\frac{\nu}{\sqrt{2}} \<\Gamma \d_y \Phi\> = \sqrt{\eps}\,V_2,	 &
V_3^\eps
	&:= -\sqrt{\eps}\,\frac{ \rho \nu}{\sqrt{2}} \<f \d_y \Phi\> =  \sqrt{\eps}\,V_3. \label{eq:Veps}
\end{align}
\end{corollary}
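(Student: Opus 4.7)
The plan is to trace exactly where the parameters $V_2$ and $V_3$ can enter the expression \eqref{eq:P1} for $P^{(1)}(t,x,z)$. The claim amounts to showing that, among all the objects appearing in the integrand, only one — the correction $\Lam_\om^{(1)}$ — has any $V_2,V_3$ dependence, and that this dependence is affine.

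First I would absorb the prefactor $\sqrt{\eps}$ into the expectation, rewriting
\begin{align}
\sqrt{\eps}\,P^{(1)}(t,x,z)
  &=	e^{-rt} \int C_\om^{(0)}(t) \,
        \Etilde_z \!\left[ \bigl( -\sqrt{\eps}\,\Lam_\om^{(1)}\, T_t \bigr) e^{-\Lam_\om^{(0)} T_t} \right]
        \Psi_\om^{(0)}(x) \, d\om ,
\end{align}
so that the group parameters enter only through $\sqrt{\eps}\,\Lam_\om^{(1)}$. I would then invoke the explicit formula \eqref{eq:Lambda1} for $\Lam_\om^{(1)}$ together with the definitions $V_2^\eps = \sqrt{\eps}\,V_2$ and $V_3^\eps = \sqrt{\eps}\,V_3$ from \eqref{eq:Veps} to obtain
\begin{align}
\sqrt{\eps}\,\Lam_\om^{(1)}
  &=	-V_3^\eps\Bigl(\bigl(i\om+\tfrac{1}{2}\bigr)^3 - \bigl(i\om+\tfrac{1}{2}\bigr)^2\Bigr)
       -V_2^\eps\Bigl(\bigl(i\om+\tfrac{1}{2}\bigr)^2 - \bigl(i\om+\tfrac{1}{2}\bigr)\Bigr),
\end{align}
which is manifestly linear (indeed affine with zero constant term) in $(V_2^\eps,V_3^\eps)$.

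Next I would check that every other ingredient of the integrand is independent of $V_2,V_3$. The coefficient $C_\om^{(0)}(t)$ from \eqref{eq:C0} depends only on $r$ and the payoff $h$; the basis function $\Psi_\om^{(0)}(x)$ in \eqref{eq:Psi0} depends only on $\om$ and $x$; the leading eigenvalue $\Lam_\om^{(0)}$ in \eqref{eq:Lambda0} depends only on $\sig^2 = \langle f^2\rangle$ and $\om$; and the time-change $T_t$ (hence its law under $\Etilde_z$) is specified independently of $(V_2,V_3)$ by the assumptions of Section \ref{sec:assumptions}. Therefore the map $(V_2^\eps,V_3^\eps)\mapsto \Etilde_z[(-\sqrt{\eps}\,\Lam_\om^{(1)}T_t)e^{-\Lam_\om^{(0)}T_t}]$ is linear by linearity of the expectation, and integration against $C_\om^{(0)}(t)\Psi_\om^{(0)}(x)\,d\om$ preserves this linearity.

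The only place where care is needed is exchanging the $(V_2^\eps,V_3^\eps)$-linear combination with the expectation and with the $\om$-integral, so I would briefly justify the interchange using the finiteness condition $C_T < \infty$ from assumption~\ref{assume:T} (which controls $\Etilde_z[T_t e^{-\Lam_\om^{(0)}T_t}]$ uniformly for $\Lam_\om^{(0)}\ge \Lam_{min}^{(0)}>0$) together with decay of $C_\om^{(0)}(t)$ in $\om$ inherited from the payoff. Given those estimates, the decomposition produced in the previous paragraph completes the proof; no further computation is required.
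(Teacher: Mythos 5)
Your argument is correct and follows essentially the same route as the paper's proof: observe that $V_2,V_3$ do not enter $C_\om^{(0)}(t)$, $\Lam_\om^{(0)}$, $\Psi_\om^{(0)}(x)$, or the law of $T_t$; note from \eqref{eq:P1} that $P^{(1)}$ is linear in $\Lam_\om^{(1)}$; and conclude from \eqref{eq:Lambda1} that $\Lam_\om^{(1)}$ is linear in $V_2,V_3$. Your added remarks about absorbing the $\sqrt{\eps}$ prefactor and justifying the interchanges via assumption~\ref{assume:T} are fine, but not needed beyond what the paper states.
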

\begin{proof}
First, we note that $V_2$ and $V_3$ do not appear in $C_\om^{(0)}(t)$, $\Lam_\om^{(0)}$ or $\Psi_\om^{(0)}(x)$.  Next, from \eqref{eq:P1} we see that $P^{(1)}(t,x,z)$ is linear in $\Lam_\om^{(1)}$, which, from \eqref{eq:Lambda1}, is a linear function of $V_2$ and $V_3$.  Corollary \ref{cor:Vs} follows immediately.
\end{proof}
Corollary \ref{cor:Vs} relates to a very important feature of the TC-FMR-SV pricing methodology.  Consider first the FMR-SV framework (no time-change).  To describe a particular model within the FMR-SV class, one would have to specify an ergodic diffusion $Y^\eps_t$, a market price of volatility risk $\Gamma(y)$ and a volatility function $f(y)$.  For the purposes of illustration, we chose to specify $Y^\eps_t$ as an OU process.  This choice led us to introduce five unobservable parameters ($m$, $\eps$, $\nu$, $\rho$, $y$).  Note however, that neither the value of these parameters nor the precise form of $\Gamma(y)$ and $f(y)$ are required in order to calculate the approximate price of an option
(the approximate price of an option in the FMR-SV framework is given by setting $T_t = t$ in the TC-FMR-SV framework).
Rather, to $\O(\sqrt{\eps})$, the approximate price of an option can be expressed in terms of $(\sigma^2, V_2^\eps, V_3^\eps)$ as well as the observable parameters $(t,x,r)$.
\par
As mentioned previously, the particular choice of $Y_t^\eps$ as an OU is not central to our analysis.  We could have simply written $Y_t^\eps$ under the physical measure $\P$ as
\begin{align}
dY_t^\eps				&=			\frac{1}{\eps} \alpha(Y_t^\eps) dt + \frac{1}{\sqrt{\eps}} \beta(Y_t^\eps) \, dB_t , & Y_0^\eps = y ,
\end{align}
where $\alpha(y)$ and $\beta(y)$ are such that the assumptions of section \ref{sec:assumptions} are satisfied.
In this case, the group parameters would have become
\begin{align}
V_3^\eps &=	-\sqrt{\eps}\frac{\rho}{2} \< \beta f \d_y \Phi \>, &V_2^\eps &= \sqrt{\eps} \frac{1}{2} \< \beta \Gamma \d_y \Phi \>.
\end{align}
The key point is that, when the volatility-driving process $Y_t^\eps$ is fast mean-reverting and satisfies the conditions of section \ref{sec:assumptions}, the details of the process are unimportant.  In terms of option-pricing, to $\O(\sqrt{\eps})$, all that matters are the values of $(\sigma^2, V_2^\eps, V_3^\eps)$.  This is true regardless of the particular choice of $Y_t^\eps$.
\par
In the TC-FMR-SV framework the situation remains the same -- to calculate the approximate price of an option, precise knowledge of the volatility-driving process $Y_t^\eps$ is \emph{not} required.  However, the particular choice of random time-change $T_t$ \emph{does} affect the approximate price $P^{(0)}(t,x,z) + \sqrt{\eps} \, P^{(1)}(t,x,z)$ of an option.  Thus, when calibrating a particular model within the TC-FMR-SV class to fit market data (be the data quoted option prices or implied volatilities), the unobservable parameters that must be extracted are $(\sig^2, V_2^\eps, V_3^\eps)$ as well as the parameters of the random time-change $T_t$.  We will show in section \ref{sec:examples} that different time-changes induce distinct implied volatility surfaces.  Whether the introduction of time-change parameters is justified by the modeling flexibility the random time-change provides is a topic left for future research.

\section{Accuracy of the Approximation $P^\eps \approx P^{(0)} + \sqrt{\eps}\,P^{(1)}$}\label{sec:accuracy}
In the previous section, we gave a derivation of the approximate price of a European option $P^\eps \approx P^{(0)} + \sqrt{\eps}\,P^{(1)}$ using singular perturbative arguments.  The purpose of this section is to establish the accuracy of this approximation.  In addition to the assumptions listed in section \ref{sec:assumptions}, we shall need one additional assumption for our accuracy proof.
\begin{itemize}
	\item The payoff function $h(e^{rt+x})$ and all derivatives taken with respect to $x$ are smooth and bounded.
\end{itemize}
Obviously, the most common options -- calls and puts -- do not fit this assumption.  To prove the accuracy of our pricing approximation for calls and puts would require regularizing the option payoff as was
done for the class of FMR-SV models in \cite{fouque2003proof}.  The regularization procedure is beyond the scope of this paper.  As such, we limit our analysis to options with smooth and bounded payoffs.
%
\par
Before stating our main accuracy result we need the following Lemma.
\begin{lemma}\label{lem:Jbound}
Suppose $J(y)$ is at most polynomially growing.  Then, there exists a constant $C$ such that
\begin{align}
\Etilde_y \left[ J\left(Y_t^\eps\right) \right] &\leq C e^{t C_\Gamma^2}.
\end{align}
\end{lemma}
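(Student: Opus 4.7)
The plan is to reduce the $\Ptilde$-expectation to a $\P$-expectation via a Girsanov change of measure, then control the two resulting factors using (i) the bound $|\Gamma| \leq C_\Gamma$ and (ii) the uniform-in-$t$ moment bound on $Y_t^\eps$ under $\P$ from assumption \ref{assume:moments}. The relevant observation is that, comparing \eqref{eq:dYphys} and \eqref{eq:dY}, the laws of $Y_t^\eps$ under $\P$ and $\Ptilde$ differ only through the additional drift $-\frac{\nu\sqrt{2}}{\sqrt{\eps}}\Gamma(Y_t^\eps)\,dt$, so Girsanov gives the Radon-Nikodym derivative
\begin{align}
Z_t \;:=\; \frac{d\Ptilde}{d\P}\bigg|_{\F_t} \;=\; \exp\!\left( -\int_0^t \Gamma(Y_s^\eps)\,dB_s \,-\, \tfrac{1}{2}\int_0^t \Gamma^2(Y_s^\eps)\,ds \right),
\end{align}
which is a genuine $\P$-martingale because $\Gamma$ is bounded and Novikov's condition is trivially satisfied.

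First, I would write $\Etilde_y[J(Y_t^\eps)] = \E_y[Z_t\, J(Y_t^\eps)]$ and apply Cauchy--Schwarz under $\P$ to obtain the bound $\Etilde_y[J(Y_t^\eps)] \leq \bigl(\E_y[Z_t^2]\bigr)^{1/2}\bigl(\E_y[J^2(Y_t^\eps)]\bigr)^{1/2}$. The second factor is handled immediately by assumption \ref{assume:moments}: since $J$ is at most polynomially growing, so is $J^2$, hence $\E_y[J^2(Y_t^\eps)]$ is bounded by a constant $C_1$ uniformly in $t$.

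The key step is controlling $\E_y[Z_t^2]$. I would multiply and divide by the Doléans-Dade exponential of $-2\int_0^t \Gamma(Y_s^\eps)\,dB_s$, writing
\begin{align}
Z_t^2 \;=\; \exp\!\left(\int_0^t \Gamma^2(Y_s^\eps)\,ds\right) \exp\!\left( -2\int_0^t \Gamma(Y_s^\eps)\,dB_s \,-\, 2\int_0^t \Gamma^2(Y_s^\eps)\,ds \right).
\end{align}
The first exponential factor is pointwise bounded by $e^{tC_\Gamma^2}$ using $|\Gamma|\leq C_\Gamma$, while the second is again an exponential martingale of a bounded integrand and therefore has $\P$-expectation equal to $1$. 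This gives $\E_y[Z_t^2]\leq e^{tC_\Gamma^2}$.

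Combining the two estimates yields $\Etilde_y[J(Y_t^\eps)] \leq \sqrt{C_1}\, e^{tC_\Gamma^2/2}$, which is stronger than the claimed bound and implies it with $C = \sqrt{C_1}$ since $t\geq 0$. I do not anticipate a serious obstacle here: the only subtlety is verifying that the exponential martingales involved are true martingales rather than merely local martingales, but this is immediate from the boundedness of $\Gamma$ via Novikov. The rest is a mechanical application of Cauchy--Schwarz together with assumption \ref{assume:moments}.
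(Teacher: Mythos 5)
Your proof is correct and follows essentially the same route as the paper: identify the Girsanov density $Z_t = M_t$, bound $\E_y[M_t^2] \leq e^{tC_\Gamma^2}$ by the multiply-and-divide trick with $|\Gamma|\leq C_\Gamma$, apply Cauchy--Schwarz against the $\P$-moment bound from assumption \ref{assume:moments}, and conclude (the paper just carries out the Cauchy--Schwarz step on $|Y_t^\eps|^k M_t$ for each monomial and invokes the polynomial bound on $J$ at the end, which is the same calculation).
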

\begin{proof}
First, we define $M_t$, the exponential martingale used in Girsanov's theorem to transform the measure on $Y_t^\eps$ from $\P$ to $\Ptilde$
\begin{align}
M_t		&:=		\exp \left(-\int_0^t \Gamma(Y_s^\eps)dB_s - \frac{1}{2}\int_0^t \Gamma^2(Y_s^\eps)ds \right) .
\end{align}
We note
\begin{align}
\E_y \left[ M_t^2 \right]	
	&=			\E_y \left[ \exp \left(	-\int_0^t \left(2\Gamma(Y_s^\eps)\right)dB_s - \frac{1}{2}\int_0^t \left(2\Gamma(Y_s^\eps)\right)^2 ds 
																+ \int_0^t \left(\Gamma^2(Y_s^\eps)\right)ds \right) \right]	\\
	&\leq		\E_y \left[ \exp \left(	-\int_0^t \left(2\Gamma(Y_s^\eps)\right)dB_s - \frac{1}{2}\int_0^t \left(2 \Gamma(Y_s^\eps)\right)^2 ds 
																+ \int_0^t C_\Gamma^2 ds \right) \right]	\\
	&=			e^{ t C_\Gamma^2 }
					\E_y \left[ \exp \left(-\int_0^t 2\Gamma(Y_s^\eps)dB_s - \frac{1}{2}\int_0^t \left(2\Gamma(Y_s^\eps)\right)^2 ds \right) \right] 	\\
	&=			e^{ t C_\Gamma^2 }	,
\end{align}
where we have used assumption \ref{item:Cgamma} of section \ref{sec:assumptions} to bound $\Gamma^2 \l( \Y_t \r)$ by $C_\Gamma^2$.  Hence
\begin{align}
\Etilde_y \left[ |Y_t^\eps|^k \right]		&=		\E_y\left[ |Y_t^\eps|^k M_t\right]
																				\leq	\Big( \E_y \left[ |Y_t^\eps|^{2k} \right] \E_y \left[ M_t^2 \right] \Big)^{1/2}
																				\leq 	\Big( C_{2k} e^{ t C_\Gamma^2 } \Big)^{1/2}	
																				=			\sqrt{C_{2k}} e^{ t C_\Gamma^2/2 }	.
\end{align}
The first inequality is an application of Cauchy-Schwarz.  The second inequality follows from the above bound as well as assumption \ref{assume:moments} of section \ref{sec:assumptions}.  Since $J(y)$ is bounded by a polynomial, this proves lemma \ref{lem:Jbound}.
\end{proof}
From here, we shall proceed as follows.  First, we shall establish the accuracy of the approximation $u^\eps(T,x,y;t) \approx u^{(0)}(T,x;t) + \sqrt{\eps}\, u^{(1)}(T,x;t)$.  Then, we show how this result can be related to the accuracy of the approximate option price $P^\eps(t,x,y,z) \approx P^{(0)}(t,x,z) + \sqrt{\eps}\, P^{(1)}(t,x,z)$.
\begin{theorem}\label{thm:uBound}
For fixed $(T,x,y,t)$ there exists a constant $C$ such that for any $\eps<1$ the solution $u^\eps(T,x,y;t)$ to PDE \eqref{eq:uPDE} with BC \eqref{eq:newBC} satisfies
\begin{align}
\left| u^\eps - \left(u^{(0)}+\sqrt{\eps}\,u^{(1)}\right) \right| &\leq \eps \, C \, T e^{T C_\Gamma^2}	.
\end{align}
\end{theorem}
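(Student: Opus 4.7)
The plan is to adapt the standard Fouque--Papanicolaou--Sircar singular perturbation accuracy argument to the present setting. I would augment the two-term approximation with two additional correctors and set $\tilde u^\eps := u^{(0)} + \sqrt{\eps}\, u^{(1)} + \eps\, u^{(2)} + \eps^{3/2}\, u^{(3)}$, where $u^{(2)}(T,x,y;t)$ and $u^{(3)}(T,x,y;t)$ are defined by Poisson equations in $y$ arising from matching powers of $\sqrt{\eps}$ in the formal expansion of $(-\d_T + \L_{X,Y}^\eps) \tilde u^\eps$. The $\O(\eps^0)$ and $\O(\eps^{1/2})$ matching conditions give
\begin{align}
\L^{(-2)} u^{(2)}
    &= \bigl(\<\L^{(0)}\> - \L^{(0)}\bigr) u^{(0)} = -\tfrac{1}{2}\bigl(f^2(y)-\sig^2\bigr)\bigl(\d_{xx}^2 - \d_x\bigr) u^{(0)}, \\
\L^{(-2)} u^{(3)}
    &= \bigl(\<\L^{(0)}\> - \L^{(0)}\bigr) u^{(1)} + \bigl(\A^{(1)} - \L^{(-1)}\bigr) u^{(2)},
\end{align}
which, following the same reasoning used for $\Phi$ in \eqref{eq:PoissonPhi}, admit solutions that are at most polynomially growing in $y$ thanks to Assumption \ref{assume:fPoisson}. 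The smooth-and-bounded hypothesis on the payoff propagates through the effective equations for $u^{(0)}$ and $u^{(1)}$ so that all $x$-derivatives appearing on the right-hand sides are bounded uniformly in $T$ and $t$.

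Next I would compute the residual $R^\eps := (-\d_T + \L_{X,Y}^\eps) \tilde u^\eps$. By construction, the $\O(\eps^{-1}), \O(\eps^{-1/2}), \O(\eps^0), \O(\eps^{1/2})$ contributions all cancel, leaving
\begin{align}
R^\eps
    &= \eps \bigl( \L^{(-1)} u^{(3)} + \L^{(0)} u^{(2)} - \d_T u^{(2)} \bigr) + \eps^{3/2} \bigl( \L^{(0)} u^{(3)} - \d_T u^{(3)} \bigr),
\end{align}
and each bracketed expression has the form (polynomial in $y$) $\times$ (smooth bounded function of $T, x, t$). Writing $E^\eps := u^\eps - \tilde u^\eps$, we obtain the Cauchy problem $(-\d_T + \L_{X,Y}^\eps) E^\eps = -R^\eps$ with initial datum $E^\eps(0, x, y; t) = -\eps\, u^{(2)}(0, x, y; t) - \eps^{3/2}\, u^{(3)}(0, x, y; t)$, and the Feynman--Kac representation of the Kolmogorov backward equation gives
\begin{align}
E^\eps(T, x, y; t)
    &= \Etilde_{x, y} \bigl[ E^\eps(0, X_T, Y_T^\eps; t) \bigr] + \int_0^T \Etilde_{x, y} \bigl[ R^\eps(T-s, X_s, Y_s^\eps; t) \bigr]\, ds.
\end{align}

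To close the argument I would invoke Lemma \ref{lem:Jbound}. Because every integrand on the right is of the form (polynomial in $Y_s^\eps$) $\times$ (bounded function of the remaining variables), Lemma \ref{lem:Jbound} yields a pointwise bound of the type $C\, e^{s C_\Gamma^2}$ with $C$ depending on $(x, y, t)$ but independent of $\eps$ and $s$. The boundary term then contributes at most $\eps\, C\, e^{T C_\Gamma^2}$ and the time integral at most $\eps\, C\, T\, e^{T C_\Gamma^2}$; combining these and absorbing constants produces the advertised bound $\eps\, C\, T\, e^{T C_\Gamma^2}$.

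I expect the principal obstacle to be the verification that $u^{(2)}$, $u^{(3)}$, and their time derivatives are well-behaved: one needs the $x$-derivatives of $u^{(0)}$ and $u^{(1)}$ up to a fixed (and explicitly computable) order to be smooth and uniformly bounded on the relevant time horizon. This is precisely where the hypothesis that $h(e^{rt+x})$ and all its $x$-derivatives are smooth and bounded enters -- for call or put payoffs these bounds fail near expiration, which is why the regularization procedure of \cite{fouque2003proof} would be required to extend the result, as acknowledged in the paragraph preceding Lemma \ref{lem:Jbound}.
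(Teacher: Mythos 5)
Your argument is the standard Fouque--Papanicolaou--Sircar accuracy proof and matches the paper's own proof in all essentials: augment the two-term approximation with $\eps\,u^{(2)} + \eps^{3/2}\,u^{(3)}$, observe that the $\O(\eps^{-1})$ through $\O(\eps^{1/2})$ terms cancel so that the residual and the initial datum are each $\O(\eps)$, represent the error via Feynman--Kac, and invoke Lemma \ref{lem:Jbound} together with the polynomial-in-$y$ bounds on the correctors. The only blemish is a typo in your Poisson equation for $u^{(3)}$ --- the source should read $\bigl(\<\L^{(0)}\> - \L^{(0)}\bigr)u^{(1)} + \A^{(1)}u^{(0)} - \L^{(-1)}u^{(2)}$ rather than $\bigl(\A^{(1)} - \L^{(-1)}\bigr)u^{(2)}$ --- and you elide the final triangle-inequality step that converts the bound on $u^\eps - \tilde u^\eps$ into the stated bound on $u^\eps - (u^{(0)} + \sqrt{\eps}\,u^{(1)})$ using boundedness of $u^{(2)}$ and $u^{(3)}$; neither changes the substance.
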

\begin{proof}
First, we define a remainder term $R^\eps(T,x,y;t)$
\begin{align}
R^\eps	&:=		u^\eps - \left(u^{(0)} + \sqrt{\eps} \, u^{(1)} + \eps \, u^{(2)} + \eps \sqrt{\eps}\,u^{(3)} \right)	.
\end{align}
Next, we see that
\begin{align}
\left(-\d_T + \L_{X,Y}^\eps \right) R^\eps
	&=					\left(-\d_T + \L_{X,Y}^\eps \right) u^\eps - 
							\frac{1}{\eps}\L^{(-2)}u^{(0)} - \frac{1}{\sqrt{\eps}}\left(\L^{(-2)} u^{(1)} + \L^{(-1)} u^{(0)}\right) \\
	&\qquad	 -	\left(\L^{(-2)} u^{(2)} + \L^{(-1)} u^{(1)} + \left(-\d_T+ \L^{(0)}\right) u^{(0)}\right)	\\	
	&\qquad	 -	\sqrt{\eps}\left(\L^{(-2)} u^{(3)} + \L^{(-1)} u^{(2)} + \left(-\d_T+ \L^{(0)}\right) u^{(1)}\right)	\\
	&\qquad	 -	\eps \left(\L^{(-1)} u^{(3)} + \left(-\d_T+ \L^{(0)}\right) u^{(2)}+\sqrt{\eps}\left(-\d_T+ \L^{(0)}\right) u^{(3)}\right)	,	\\
\left(-\d_T + \L_{X,Y}^\eps \right) R^\eps
	&=			- \, \eps \, F^\eps	,	\label{eq:RPDE}\\
F^\eps
	&:=			\left(\L^{(-1)} u^{(3)} + \left(-\d_T+ \L^{(0)}\right) u^{(2)}+\sqrt{\eps}\left(-\d_T+ \L^{(0)}\right) u^{(3)}\right)	,	\\
R^{\eps}(0,x,y;t)
	&=			\eps \, G^\eps(x,y;t)	,	\label{eq:RBC}\\
G^\eps(x,y;t)
	&:=			 - \left( u^{(2)}(0,x,y;t) + \sqrt{\eps}\,u^{(3)}(0,x,y;t) \right)	.
\end{align}
Now, from the Feynman-Kac formula we note that $R^{\eps}(T,x,y;t)$, which is the solution to PDE \eqref{eq:RPDE} with BC \eqref{eq:RBC}, has the following stochastic representation:
\begin{align}
R^{\eps}(T,x,y;t)
	&=	\eps \, \Etilde_{x,y} \left[ G^\eps\left(X_T,Y_T^\eps;t\right) + \int_0^T F^\eps\left(s,X_s,Y_s^\eps;t\right) ds \right]	.
\end{align}
As established in \cite{fouque}, from the boundedness of the payoff function $h\l(e^{rt+x}\r)$, and from the assumptions of section \ref{sec:assumptions}, one can deduce that $F^\eps\left(s,x,y;t\right)$ and $G^\eps\left(x,y;t\right)$ are bounded in $x$ and at most polynomially growing in $y$.  Hence, by lemma \ref{lem:Jbound}, there exists a constant $C_1$ such that
\begin{align}
\left| R^{\eps}(T,x,y;t) \right|		&\leq	\eps \, C_1 \, T e^{T C_\Gamma^2}	.
\end{align}
Therefore,
\begin{align}
\left| u^\eps - \left( u^{(0)} + \sqrt{\eps} u^{(1)}\right) \right|
	&= 			\left| 
					\left(u^{(0)} + \sqrt{\eps} \, u^{(1)} + \eps \, u^{(2)} + \eps \sqrt{\eps}\,u^{(3)} + R^\eps \right) - \left( u^{(0)} + \sqrt{\eps} u^{(1)}\right)
					\right|	\\
	&\leq		\left| R^\eps \right| +	\eps \left| u^{(2)} + \sqrt{\eps}\,u^{(3)} \right|	\\
	&\leq		\eps \, C_1 \, T e^{T C_\Gamma^2}	+ \eps \, C_2	\\
	&\leq		\eps \,	C \, T e^{T C_\Gamma^2}	,
\end{align}
for some constants $C_2$ and $C$.  This establishes Theorem \ref{thm:uBound}.
\end{proof}
\noindent Now we state our main accuracy result.
\begin{theorem}\label{thm:PBound}
For fixed $(t,x,y,z)$ there exists a constant $C$ such that for any $\eps<1$ the price of a European option $P^\eps(t,x,y,z)$ given by \eqref{eq:Ptemp}, satisfies
\begin{align}
\left| P^\eps - \left(P^{(0)}+\sqrt{\eps}\,P^{(1)}\right) \right| &\leq \eps \, C .
\end{align}
\end{theorem}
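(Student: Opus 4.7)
The plan is to reduce the accuracy result for option prices to the pointwise (in $T$) accuracy bound already established in Theorem \ref{thm:uBound}, by exploiting the conditioning identity \eqref{eq:Ptemp} together with the independence of the time-change $T_t$ from $(X_s,Y_s^\eps)$.

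First I would write the error as an expectation over the time-change. By \eqref{eq:Ptemp} and the analogous representations of $P^{(0)}$ and $P^{(1)}$ in terms of $u^{(0)}(T_t,x;t)$ and $u^{(1)}(T_t,x;t)$ (which arise by inserting the expansion of $u^\eps$ and using Fubini / the spectral representation), we have
\begin{align}
P^\eps - \bigl(P^{(0)} + \sqrt{\eps}\,P^{(1)}\bigr)
  &= e^{-rt}\,\Etilde_{x,y,z}\!\left[u^\eps(T_t,x,y;t) - \bigl(u^{(0)}(T_t,x;t) + \sqrt{\eps}\,u^{(1)}(T_t,x;t)\bigr)\right].
\end{align}
Here the expectation is first over the trajectory of $(X_s,Y_s^\eps)$ (which produces $u^\eps$) and then, by the independence built into assumption \ref{assume:T}, over the independent time-change $T_t$.

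Next I would invoke Theorem \ref{thm:uBound} pathwise. For each realization of $T_t\ge 0$, that theorem gives a constant $C_1$, independent of the particular value of the first argument, such that
\begin{align}
\bigl|u^\eps(T_t,x,y;t) - u^{(0)}(T_t,x;t) - \sqrt{\eps}\,u^{(1)}(T_t,x;t)\bigr|
  &\le \eps\,C_1\,T_t\,e^{T_t C_\Gamma^2}.
\end{align}
Taking absolute values inside the expectation in the display above and substituting this pathwise bound yields
\begin{align}
\bigl|P^\eps - P^{(0)} - \sqrt{\eps}\,P^{(1)}\bigr|
  &\le \eps\,C_1\,e^{-rt}\,\Etilde_z\!\left[T_t\,e^{T_t C_\Gamma^2}\right]
  = \eps\,C_1\,e^{-rt}\,C_T,
\end{align}
where the last equality uses assumption \ref{assume:T}, and $\Etilde_{x,y,z}$ reduces to $\Etilde_z$ because the integrand depends only on $T_t$, which is driven only by $Z$. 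Setting $C := C_1 e^{-rt} C_T$ gives the desired bound $\eps\,C$ for fixed $(t,x,y,z)$.

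The main obstacle, as is typical for these conditioning arguments, lies in justifying that the constant from Theorem \ref{thm:uBound} can be taken uniform enough in its first argument to be pulled outside the expectation over $T_t$. This is genuine here because the dependence on the first argument in that bound is explicit (the factor $T e^{T C_\Gamma^2}$), so the only thing to check is that the remaining constant is independent of $T_t$ — which is exactly how Theorem \ref{thm:uBound} was phrased. Assumption \ref{assume:T} then delivers the needed integrability of this explicit weight against the law of $T_t$, so no additional work on the time-change side is required; the $\O(\eps)$ accuracy is inherited directly.
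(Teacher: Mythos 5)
Your proposal is correct and follows essentially the same route as the paper: both write the pricing error as $e^{-rt}$ times an expectation of the error in $u^\eps$ evaluated at the random time $T_t$, condition on $T_t$ to apply Theorem \ref{thm:uBound} pathwise, and then integrate the explicit weight $T_t e^{T_t C_\Gamma^2}$ against the law of $T_t$ using assumption \ref{assume:T}. The only cosmetic difference is that you retain the factor $e^{-rt}$ explicitly, while the paper absorbs it into the constant.
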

\begin{proof}
\begin{align}
&	\left| P^\eps(t,x,y,z) - \left( P^{(0)}(t,x,z) + \sqrt{\eps}\, P^{(1)}(t,x,z)\right) \right|	\\
&\qquad		=			\left| e^{-r t} \Etilde_{x,y,z} \left[ u^\eps(T_t,x,y,z;t) - \left( u^{(0)}(T_t,x,z;t) + \sqrt{\eps}\, u^{(1)}(T_t,x,z;t)\right) \right] \right|	\\
&\qquad		=	e^{-rt} \left| \Etilde_{x,y,z} \left[ \Etilde_{x,y,z} \left[ 
								u^\eps(T_t,x,y,z;t) - \left( u^{(0)}(T_t,x,z;t) + \sqrt{\eps}\, u^{(1)}(T_t,x,z;t)\right) \Big| \, T_t 
								\right] \right] \right|	\\
&\qquad		\leq	\Etilde_{x,y,z} \left[ \Etilde_{x,y,z} \left[
								\left| u^\eps(T_t,x,y,z;t) - \left( u^{(0)}(T_t,x,z;t) + \sqrt{\eps}\, u^{(1)}(T_t,x,z;t)\right) \right| \Big| \, T_t 
								\right] \right]	\\
&\qquad		\leq	\Etilde_{x,y,z} \left[
								\eps \,	C_1 \, T_t e^{T_t C_\Gamma^2} 
								\right]	\qquad \text{(by Theorem \ref{thm:uBound})}\\
&\qquad		\leq	\eps \, C_1 \, C_T \hspace{27 mm} \text{(by assumption \ref{assume:T} of section \ref{sec:assumptions})}	\\
&\qquad		=			\eps \, C	,
\end{align}
for some constants $C_1$ and $C$.  This establishes Theorem \ref{thm:PBound}.
\end{proof}

\section{Call Option Examples}\label{sec:examples}
In this section we provide examples of how to calculate the price of a European call option in four different time-change regimes.  These examples demonstrate both the flexibility and analytic tractability of the TC-FMR-SV framework.
\subsection{FMR-SV} \label{sec:FMRSVex}
The first regime we consider is that of no random time-change (i.e. $T_t = t$).  This choice for $T_t$  reduces the TC-FMR-SV framework to that of pure FMR-SV.  To calculate the approximate price of a European call option $P^{(0)}(t,x) + \sqrt{\eps} \, P^{(1)}(t,x)$ we use equations \eqref{eq:P0} and \eqref{eq:P1} of Theorem \ref{thm:main}.  Since $T_t$ is not random in the present scenario, expectations \eqref{eq:Expectation0} and \eqref{eq:Expectation1} reduce to
\begin{align}
\Etilde_z \left[ e^{-\Lam_\om^{(0)} T_t}\right]																										&=	e^{-\Lam_\om^{(0)} t}	,		&
\Etilde_z \left[ \l( - \Lam_\om^{(1)} T_t \r)e^{-\Lam_\om^{(0)} T_t}\right]						&=	\l( - \Lam_\om^{(1)} t \r) e^{-\Lam_\om^{(0)} t}	.
\end{align}
The $\O\l(\eps^0\r)$ eigenfunctions $\Psi_\om^{(0)}(x)$ are given in \eqref{eq:Psi0}.  Expressions for $\Lam_\om^{(0)}$ and $\Lam_\om^{(1)}$ can be found in \eqref{eq:Lambda0} and \eqref{eq:Lambda1} respectively.  Hence, what remains in order to calculate the approximate price of a call option $P^{(0)}(t,x) + \sqrt{\eps} \, P^{(1)}(t,x)$ is an expression for $C_\om^{(0)}(t)$.
\par
For a European call with strike price $K=e^k$ and time of maturity $t<\infty$, the option payoff $h(S_t)$ is given by
\begin{align}
h(S_t)	&=	\left(S_t - e^k\right)^{+}	.
\end{align}
Using equation \eqref{eq:C0} we calculate
\begin{align}
C_\om^{(0)}(t)	&=		\left( \Psi_\omega^{(0)}(\cdot), h\left(e^{rt\,+\,\cdot \, }\right)\right)_s	\\
											&=		\int_\R 	\frac{1}{\sqrt{2\pi}} e^{\l(-i\omega+1/2\r) x} \left(e^{rt+x}-e^k\right)^+ e^{-x} dx	\label{eq:integral}		\\
											&=		\frac{e^{k (1/2 - i \omega )} \left(4 i \omega-2 \right) - e^{r t + k (1/2 - i \omega )} \left(4 i \omega + 2 \right)}{\sqrt{2 \pi } \left(1+4 \omega ^2\right)} .
											\label{eq:C0call}
\end{align}
Note that integral \eqref{eq:integral} will not converge for purely real values of $\om$.  However if we  move $\om$ into the complex plane $\om = \om_r + i \om_i$ and we fix the imaginary part of $\om$ such that $\om_i<(-1/2)$, then integral \eqref{eq:integral} will converge.  Upon doing this, when calculating option prices using \eqref{eq:P0} and \eqref{eq:P1}, we must remember to hold the imaginary part $\om_i<(-1/2)$ fixed and integrate with respect to the real part of $\om$ (i.e. set $d\om = d\om_r$).
\par
Figure \ref{fig:ImpVolFPS} demonstrates the implied volatility surface induced by the TC-FMR-SV framework in the $T_t = t$ regime.  We plot implied volatilities $I$ versus $\log$-moneyness-to-maturity ratio (LMMR).  We remind the reader that $I$ and LMMR are defined by
\begin{align}
P^{BS}(t,K,I)		&=		\l(P^{(0)} + \sqrt{\eps}P^{(1)}\r)(t,K)	,	\\
\text{LMMR}		&=		\log \l( K / S_0 \r) / t	,
\end{align}
where $P^{BS}(t,K,I)	$ is the Black-Scholes price of a call option with strike price $K$, time to maturity $t$ and volatility $I$.  The notation $\l(P^{(0)} + \sqrt{\eps}P^{(1)}\r)(t,K)$ is used here to indicate the approximate price of call option as calculated in the TC-FMR-SV framework with strike price $K$ and time to maturity $t$.  The parameters used in figure \ref{fig:ImpVolFPS} are
\begin{align}
r=0.00, \sig=0.34, V_2^\eps = 0.03, V_3^\eps = -0.03	.
\end{align}
We note that the volatility surface induced by the $T_t = t$ regime is able to produce a negative at-the-money (ATM) skew, which is typical of equity call options.  However, without a stochastic time-change, the implied volatility surface will not exhibit a smile \cite{fouque}.

\subsection{TC-FMR-SV: L\'{e}vy Subordinator}
Next, we consider a regime where the random time-change $T_t^1$ is given by a L\'{e}vy subordinator.  The jumps of our prototype L\'{e}vy subordinator will be modeled as a compound Poisson process.  Specifically, we consider
\begin{align}
T_t^1		:=		\gamma t + \sum_{i=1}^{N_t^\alpha} \xi_i	,			\label{eq:T1exp}
\end{align}
where $\gamma$ is the drift of the L\'{e}vy subordinator, $N_t^\alpha$ is a homogeneous Poisson process with jump-arrival intensity $\alpha$ and the $\xi_i$ are i.i.d. random variables with exponential distribution $\xi_i \sim {\cal E}(\eta)$ and mean $\Etilde[\xi_i]=1/\eta$.  As noted in section \ref{sec:Levy} the L\'{e}vy measure $\nu(ds)$ of a compound Poisson process can be written as the product of the (net) jump-arrival intensity $\alpha$ and the distribution $F_\xi(s)$ of the i.i.d. jumps.  In this case
\begin{align}
\nu (ds) 	&=	\alpha F_\xi (ds)	,	&
F_\xi(s)	&=	1 - e^{-\eta s}	.	\label{eq:Fexp}
\end{align}
Using equations \eqref{eq:LevyKintchine2} and \eqref{eq:Fexp}, we calculate the L\'{e}vy exponent $\phi(\Lambda)$ of a $T_t^1$ as
\begin{align}
\phi(\Lambda)	&=	\gamma \Lambda + \frac{\alpha \Lambda}{\Lambda + \eta}	,	&
											&\l( \Lam > -\eta \r)	.
\end{align}
For a L\'{e}vy subordinator expectations \eqref{eq:Expectation0} and \eqref{eq:Expectation1} reduce to
\begin{align}
\Etilde_z \left[ e^{-\Lam_\om^{(0)} T_t^1}\right]																								&=	e^{-\phi_\om^{(0)} t}	,	&
\Etilde_z \left[ \l( - \Lam_\om^{(1)} T_t^1 \r)e^{-\Lam_\om^{(0)} T_t^1}\right]						&=	\l( -\phi_\om^{(1)} t\r) e^{-\phi_\om^{(0)} t}	,
\end{align}
where, for a compound Poisson process with exponentially distributed jumps, $\phi_\om^{(0)}$ and $\phi_\om^{(1)}$ are given by
\begin{align}
\phi_\om^{(0)}					&=		\gamma \Lambda_\om^{(0)}+\frac{\alpha  \Lambda_\om^{(0)}}{\eta +\Lambda_\om^{(0)}}	,	&
\phi_\om^{(1)}					&=		\gamma  \Lambda_\om^{(1)}
															- \frac{\alpha  \Lambda_\om^{(0)} \Lambda_\om^{(1)}}{(\eta +\Lambda_\om^{(0)})^2}
															+ \frac{\alpha  \Lambda_\om^{(1)}}{\eta +\Lambda_\om^{(0)}}	.
\label{eq:phiExp}
\end{align}
The coefficients $C_\om^{(0)}(t)$, given by \eqref{eq:C0call}, are unaffected by the choice of random time-change.  Hence, the approximate price of a European call option $P^{(0)}(t,x) + \sqrt{\eps} \, P^{(1)}(t,x)$ can now be calculated using \eqref{eq:P0} and \eqref{eq:P1}.
\par
Figure \ref{fig:ImpVolLevy} plots implied volatilities $I$ versus LMMR in the TC-FMR-SV regime in which $T_t^1$ is given by \eqref{eq:T1exp}.  The parameters used in figure \ref{fig:ImpVolLevy} are
\begin{align}
r=0.00, \sig=0.34, V_2^\eps = 0.03, V_3^\eps = -0.03, \eta=0.10, \alpha=0.75, \gamma=0.25 
\end{align}
We note that the implied volatility surface of figure \ref{fig:ImpVolLevy} exhibits an ATM skew as well as a true smile with implied volatilities rising at the largest strikes.  The strong skew and smile are particularly noticeable at shorter maturities.  This is consistent with the findings of \cite{gatheral2}, where it was noticed that a model for the underlying asset $S_t$ must contain jumps in order for the induced implied volatility surface to capture the steep skew and strong smile of the empirically observed implied volatility surface for short-maturity options.

\subsection{TC-FMR-SV: Absolutely Continuous Time-Change}
Recall that an absolutely continuous time-change $T_t^2$ is of the form \eqref{eq:T2}.  As an example, we consider $Z_t$ to be the classic Cox–-Ingersoll–-Ross (CIR) process and the rate function to be the identity $V(z)=z$.  We have
\begin{align}
dZ_t 						&= 	\kappa (\Theta-Z_t) dt + \Sigma \sqrt{Z_t} d\Wtilde^z_t	,	\label{eq:CIR} \\
T_t^2						&=	\int_0^t Z_s \, ds	,	\label{eq:T2cir}
\end{align}
where $\Wtilde^z_t$ is a Brownian motion under $\Ptilde$.  Here $\kappa>0$ is the rate of mean-reversion of the CIR process and $\Theta>0$ is the long-run mean. We shall refer to $\Sigma>0$ as the ``vol of vol'' since $\Sigma$ controls the volatility of $Z_t$, which in turn contributes to the volatility of $X_{T_t^2}$.  We shall enforce the condition $2 \kappa \Theta \geq \Sigma^2$ so that the CIR process $Z_t$ remains strictly positive for all time (see \cite{lamberton}, Chapter 6).
\par
In order to compute option prices in the absolutely continuous time-change regime, we need to know the Laplace transform of $T_t^2$.  This is a classical calculation, which can be found in \cite{lamberton}.  Here, we simply state the result  
\begin{align}
L(t,z,\Lambda)	&=	\Etilde_z \left[e^{-\Lambda T_t^2} \right]	\\
								&=	e^{-\kappa \Theta U(t) - z V(t)}	,	&
								&\l( \Lam \geq \frac{-\kappa^2}{2 \Sigma^2} \r) ,	\\
U(t)						&=	\frac{-2}{\Sigma ^2} \log\left[
										\frac{2 \gamma e^{(\gamma+\kappa) t/2 }}{(\gamma - \kappa )+e^{\gamma t } (\gamma +\kappa )}
										\right]	,	\\
V(t)						&=	\frac{2\Lambda  \left(e^{\gamma  t }-1\right) }{(\gamma -\kappa )+e^{\gamma  t } (\gamma +\kappa )}	,	\\
\gamma					&=	\sqrt{\kappa^2 + 2 \, \Sigma^2 \Lambda }	.
\end{align}
For an absolutely continuous time-change $T_t^2$, expectations \eqref{eq:Expectation0} and \eqref{eq:Expectation1} reduce to
\begin{align}
\Etilde_z \left[ e^{-\Lam_\om^{(0)} T_t^2}\right]																										&=	L\l(t,z,\Lam_\om^{(0)}\r) , &
\Etilde_z \left[ \l( - \Lam_\om^{(1)} T_t^2 \r)e^{-\Lam_\om^{(0)} T_t^2}\right]						&=	\d_\alpha L\l(t,z,\Lam_\om^{(1)}\alpha\r)	\Big|_{\alpha=\Lam_\om^{(0)}/\Lam_\om^{(1)}} 
\label{eq:ExpectationsCIR}
\end{align}
The above expectations, along with expression \eqref{eq:C0call} for $C_\om^{(0)}(t)$, are enough to calculate the approximate price of a call option $P^{(0)}(t,x,z) + \sqrt{\eps} \, P^{(1)}(t,x,z)$ using \eqref{eq:P0} and \eqref{eq:P1}.
\par
Figure \ref{fig:ImpVolAbsCont} plots implied volatilities $I$ versus LMMR in the TC-FMR-SV regime in which $T_t^2$ is given by \eqref{eq:T2cir}.  The parameters used in figure \ref{fig:ImpVolAbsCont} are
\begin{align}
r=0.00, \sig=0.34, V_2^\eps = 0.03, V_3^\eps = -0.03, \kappa = 1.00, \Theta = 1.00, \Sigma^2 = 2.00, z= 2.00	.
\end{align}
We observe that the implied volatility surface in figure \ref{fig:ImpVolAbsCont} exhibits an ATM skew as well as a slight smile effect.  Though, neither the skew nor smile in figure \ref{fig:ImpVolAbsCont} is as pronounced as in figure \ref{fig:ImpVolLevy} where the stochastic time-change is given by a L\'{e}vy subordinator $T_t^1$.

\subsection{TC-FMR-SV: Composite Time-Change}
Finally, we consider a composite time-change $T_t^3 = T_{T_t^2}^1$ where $T_t^1$ is the L\'{e}vy subordinator described by equation \eqref{eq:T1exp} and $T_t^2$ is the absolutely continuous time-change described by equation \eqref{eq:T2cir}.  In this regime expectations \eqref{eq:Expectation0} and \eqref{eq:Expectation1} can be found by replacing $\Lam_\om^{(0)}$ and $\Lam_\om^{(1)}$ in \eqref{eq:ExpectationsCIR} by $\phi_\om^{(0)}$ and $\phi_\om^{(1)}$ from equation \eqref{eq:phiExp}.
\par
In figure \ref{fig:ImpVolComp} we plot implied volatility $I$ induced by the composite time-change $T_t^3$ as a function of LMMR.  The parameters used in figure \ref{fig:ImpVolComp} are
\begin{align}
r=0.00, \sig=0.34, &V_2^\eps = 0.03, V_3^\eps = -0.03, \gamma = 0.05, \alpha = 0.50, \eta = 0.50, \\
 \kappa &= 2.00, \Theta = 1.00, \Sigma^2 = 4.00, z= 4.00	.
\end{align}
Once again, we observe an ATM skew and strong smile at the shortest maturity, with these features diminishing for longer maturities.

\section{Summary and Conclusions}\label{sec:conclude}
In this paper we introduce a class of TC-FMR-SV models.  The key features of our modeling framework are:
\begin{enumerate}
	\item  We are able to include jumps in the price process of the underlying asset.
	\item  We can incorporate multiple factors of stochastic volatility, which run on different time scales.
	\item  We are able to account for the empirically observed negative correlation between asset returns and volatility (the leverage effect).
\end{enumerate}
Some of the main results of our analysis are:
\begin{enumerate}
	\item  We provide simple formulas to calculate the approximate price of any European option.
	\item  By combining different time-changes, we are able to produce a wide array of implied volatility surfaces.
\end{enumerate}
Overall, we feel that the flexibility provided by the TC-FMR-SV framework and the analytic tractability it provides, merit continued research in this area.  A logical next step, for example, would be to incorporate default of the underlying asset into our class of models, as done in \cite{carr}.  Additionally, characterization of the implied volatility surface through an expansion $I^\eps \approx I^{(0)} + \sqrt{\eps}\, I^{(1)}$ would be useful.

\subsection*{Thanks}
The authors of this paper would like to thank Jean-Pierre Fouque and two anonymous referees for their thoughtful comments on this work.  Their suggestions have greatly improved both the quality and readability of this paper.

\appendix

\section{Poisson Equations and the Fredholm Alternative}\label{sec:Poisson}
The purpose of this appendix is to explain why centering condition \eqref{eq:centering} is necessary in order for the Poisson equation \eqref{eq:PoissonGeneric} to admit a solution.  To begin, we consider an ergodic Markov diffusion $Y_t$ that lives on $\R$, has invariant distribution $F_Y(dy) = \rho(y) \,dy$ and whose infinitesimal generator and adjoint are given by
\begin{align}
\L_Y				&=		\mu(y) \d_y + \frac{\sig^2(y)}{2}\d^2_{yy}	,	&
\L_Y^{*}		&=		- \d_y \mu(y) + \, \d^2_{yy} \frac{\sig^2(y)}{2} \, .
\end{align}
From the Kolmogorov forward equation, the density $\rho(y)$ satisfies $\L_Y^{*}\rho = 0$ and is given by
\begin{align}
\rho(y)	&=	C \frac{2}{\sig^2(y)}\exp\l(\int^y \frac{2 \mu(z)}{\sig^2(z)} dz \r)	,
\end{align}
where $C$ is a constant such that $\int \rho(y) \,dy =1$.
\par
We consider the following Poisson problem: find $\Psi \in C^2(\R)$ such that
\begin{align}
\L_Y \Psi + g	 &= 0 && \text{in $\R$},	\label{eq:LPsi=g} \\
\lim_{y \rightarrow \pm \infty} | \Psi(y) | &< |y|^p &&\text{for some real $p<\infty$}.
\end{align}
Multiplying  $\l(\L_Y \Psi(y)\r)$
by $\rho(y)$ and integrating with respect to $y$ we find
\begin{align}
\int_{-\infty}^{\infty} \rho \l( \L_Y \Psi \r) \, dy	&=	 	\Big[ C \exp \l( \int^y \frac{2 \mu(z)}{\sig^2(z)} dz \r) \d_y \Psi(y) \Big|_{-\infty}^\infty + \int_{-\infty}^\infty  \Psi \l( \L_Y^{*} \rho\r) \,dy	\\
																																&=		\Big[ \frac{\sig^2(y)}{2} \rho(y) \d_y \Psi(y) \Big|_{-\infty}^\infty	,
\end{align}
where we have used integration by parts and $\L_Y^{*}\rho = 0$.  Hence,
\begin{align}
\<g\> :=	\int_{-\infty}^{\infty} \rho \, g \, dy 		&=  	- \Big[ \frac{\sig^2(y)}{2} \rho(y) \d_y \Psi(y) \Big|_{-\infty}^\infty.								\label{eq:PoissonBCs}																				
\end{align}
In section \ref{sec:eigenvalue} we considered Poisson equations with respect to the operator $\L^{(-2)} = \L_Y^1$, the infinitesimal generator of the volatility-driving process $Y_t^1$ under the physical measure $\P$, which we chose to be an OU process.  Under the physical measure the OU process $Y_t^1$ has an invariant distribution $F_Y \sim \mathcal{N}(m,\nu^2)$.  In this setting, $\rho(y)$ in equation \eqref{eq:PoissonBCs} asymptotically behaves like $\sim \, e^{-y^2}$.
Thus, if we restrict ourselves to solutions $\Psi(y)$ of \eqref{eq:LPsi=g} that have at most polynomial growth as $y \rightarrow \pm \infty$ then the right-hand side of \eqref{eq:PoissonBCs} is zero.  Hence, a necessary condition for the solvability of \eqref{eq:LPsi=g} becomes $\<g\>=0$, which is precisely the centering condition given in \eqref{eq:centering}.
\par
We have established that $\<g\>=0$ is a \emph{necessary} condition for the solvability of \eqref{eq:LPsi=g}.  It turns out $\<g\>=0$ is also a \emph{sufficient} condition for  \eqref{eq:LPsi=g} to have a solution.  The \emph{Fredholm alternative} states that one of the following is true:
\begin{enumerate}
\item $\L_Y \Psi + g=0$ has a unique solution (i.e. $\L_Y$ is invertible) \emph{or}
\item $\L_Y \Psi =0$ has a non-trivial solution, in which case $\L_Y \Psi + g=0$ has a solution if $g \perp \text{Ker}\l(\L_Y^{*}\r)$.
\end{enumerate}
For the OU process with infinitesimal generator $\L_Y^1$, we note that $\L_Y^1 \Psi = 0$ has a non-trivial solution -- namely $\Psi(y) = 1$.  Hence
by the Fredholm alternative $\L_Y^1 \Psi + g =0$
has a solution if $g \perp \text{Ker}\l(\L_Y^{1*}\r)$.  Since we have $\text{Ker}\l(\L_Y^{1*}\r)=\left\{\rho(y)\right\}$, the statement $g \perp \text{Ker}\l(\L_Y^{1*}\r)$ is equivalent to the centering condition $\<g\>=0$.  The following (formal) solution can easily be checked
\begin{align}
\Psi(y)	&=	\int_0^\infty e^{s \L_Y^1} g(y) \, ds	.
\end{align}
We refer the reader to section $6.6.3$ of \cite{fouque2007wave} for a detailed exposition on Poisson equations and the Fredholm alternative.
%

%
%


\clearpage

\begin{figure}
	\centering
		\includegraphics[width=1.00\textwidth]{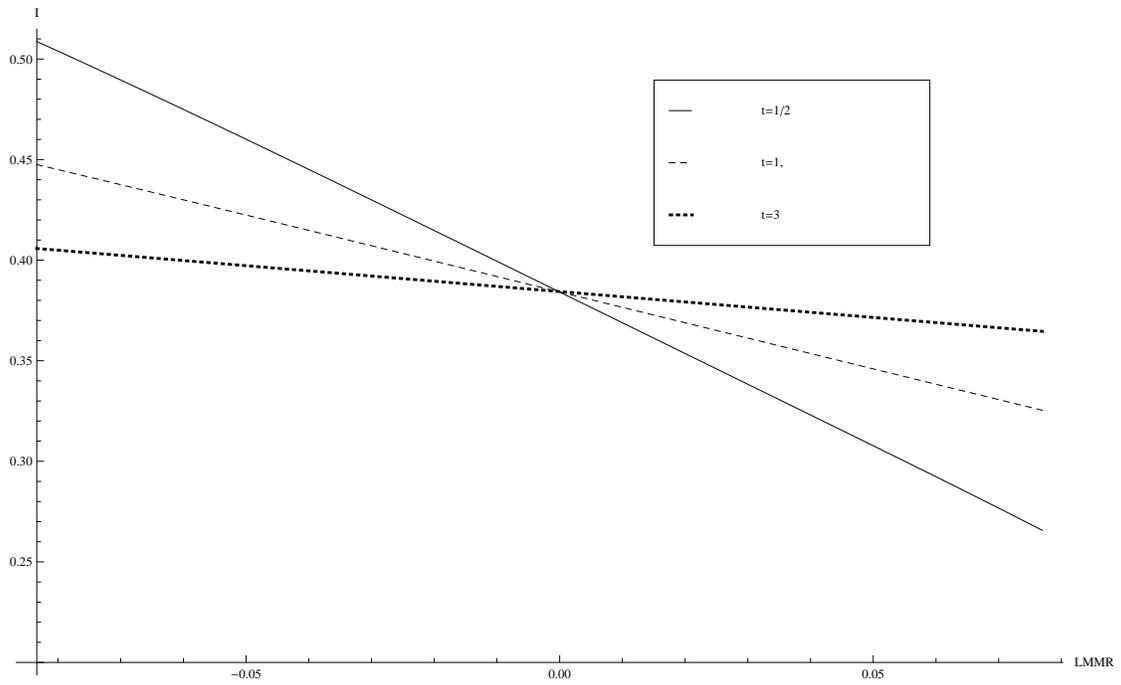}
	\caption{Implied volatility surface induced by FMR-SV.}
	\label{fig:ImpVolFPS}
\end{figure}

\begin{figure}
	\centering
		\includegraphics[width=1.00\textwidth]{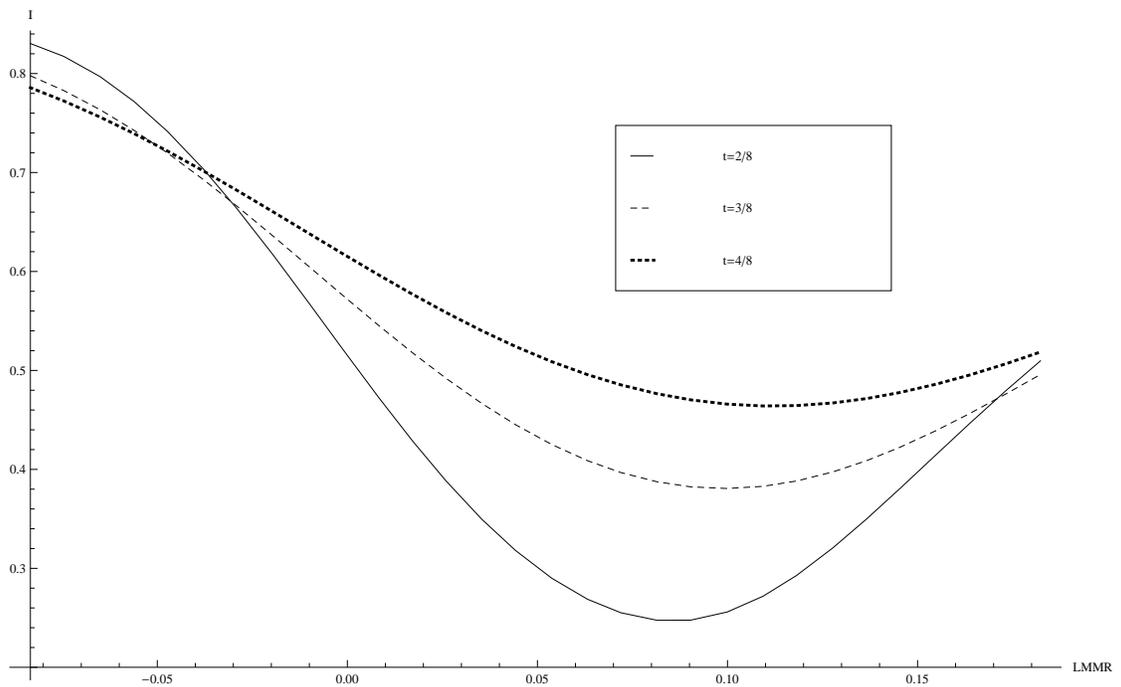}
	\caption{Implied volatility surface induced by a L\'{e}vy subordinator.}
	\label{fig:ImpVolLevy}
\end{figure}

\begin{figure}
	\centering
		\includegraphics[width=1.00\textwidth]{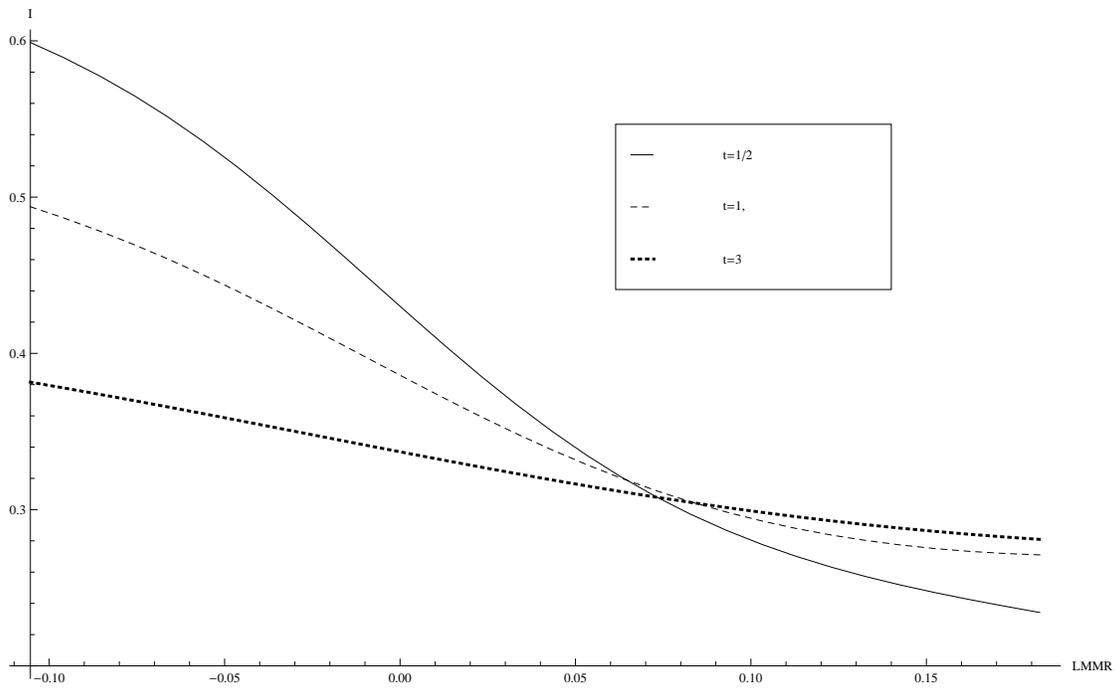}
	\caption{Implied volatility surface induced by an absolutely continuous time-change.}
	\label{fig:ImpVolAbsCont}
\end{figure}

\begin{figure}
	\centering
		\includegraphics[width=1.00\textwidth]{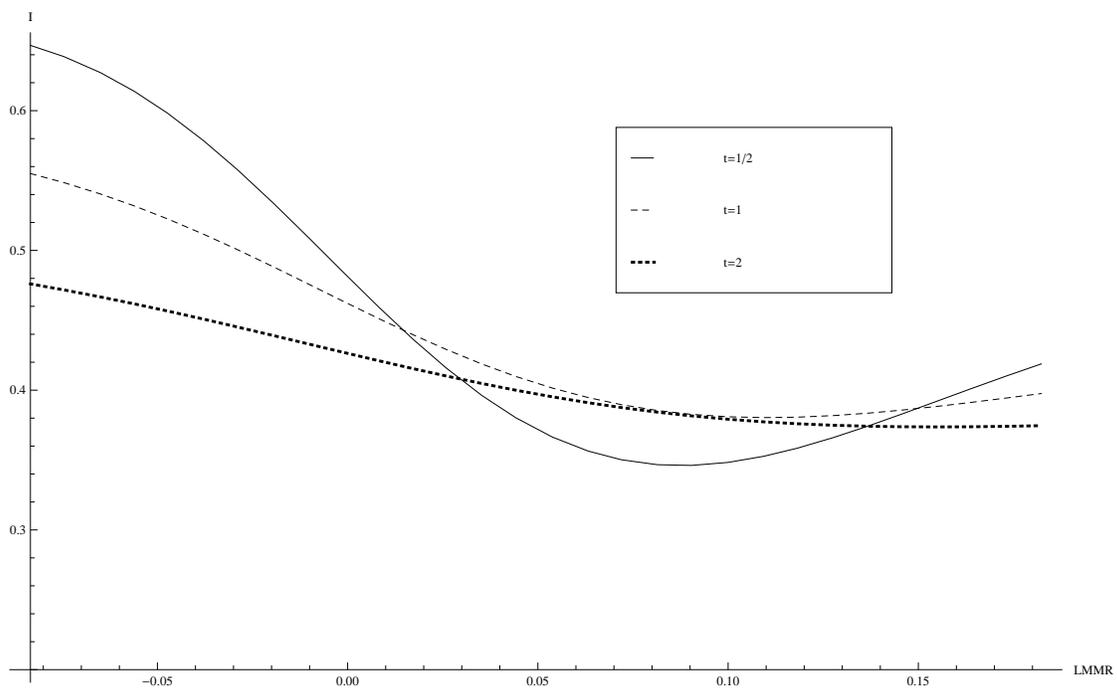}
	\caption{Implied volatility surface induced by a composite time-change.}
	\label{fig:ImpVolComp}
\end{figure}


\clearpage 					
\bibliographystyle{siam}
\bibliography{ThesisBibtex}

\end{document}